\theoremstyle{definition}
\newtheorem{definition}{Definition}[section]
\theoremstyle{plain}
\newtheorem{Theorem}[definition]{Theorem}
\newtheorem{Proposition}[definition]{Proposition}
\newtheorem{Lemma}[definition]{Lemma}
\newtheorem{Corollary}[definition]{Corollary}
\theoremstyle{remark}
\newcommand{\pt}{\partial}
\newcommand{\bs}{\backslash}
\newcommand{\rx}{\rho_X}
\newcommand{\R}{\mathbb{R}}
\let\epsilon\varepsilon
\let\phi\varphi
\newcommand{\la}{\langle}
\newcommand{\ra}{\rangle}
\newcommand{\vol}{\mathrm{Vol}}
\title{A note on the Gannon-Lee theorem}
\author{Benedict Schinnerl}
\email{benedict.schinnerl@univie.ac.at}
\author{Roland Steinbauer}
\email{roland.steinbauer@univie.ac.at}
\address[A1,A2]{University of Vienna, Faculty of Mathematics\\Oskar-Morgenstern-Platz 
	1, A-1090 Wien, Austria}
\date{\today}
\begin{document}

\begin{abstract} We prove a Gannon-Lee theorem for non-globally hyperbolic Lo\-rentzian metrics of regularity $C^1$, the most general regularity class currently available in the context of the classical singularity theorems. Along the way we also prove that any maximizing causal curve in a $C^1$-spacetime is a geodesic and hence of $C^2$-regularity. 
\end{abstract} 

\maketitle

\section{Introduction}

In the mid 1970-ies, several years after the appearance of the singularity theorems
of Penrose and Hawking (see e.g.\ \cite[Ch.\ 8]{HawEll73}),  D.\ Gannon \cite{Gan75,Gan76} and C.W.\ Lee \cite{Lee76} independently derived a body of results that relate the singularities  of a Lorentzian manifold to its topology. More precisely, in these results often dubbed Gannon-Lee theorem(s), they established that under an appropriate asymptotic flatness assumption, a non-trivial fundamental group of a (partial) Cauchy surface $\Sigma$ necessarily leads to the existence of incomplete causal geodesics. 

Most of these early results assumed global hyperbolicity, with the notable exception of \cite[Thms.\ 2.1-2]{Gan75}. However, the proofs relied on the false deduction that maximizing geodesics in a covering spacetime project to maximizing geodesics of the base, as pointed out in \cite{Gal83}. In the same paper Galloway proved a Gannon-Lee theorem without assuming global hyperbolicity by invoking the Hawking-Penrose theorem and heavily using a result from geometric measure theory. The latter fact was also reflected in  the formulation of the main theorem, which assumed an extrinsic condition on the three surface $\Sigma$ and the topological condition was that $\Sigma$ is not a handlebody\footnote{Under these conditions \cite{MSY82} guarantees the existence of a trapped surface within $\Sigma$.}. A related recent result for the globally hyperbolic case in a setting compatible with a positive cosmological constant was given in \cite{GL18}, providing a precise connection between the topology of a future expanding compact Cauchy surface and the existence of past singularities.
 
Another issue with the earlier results was that the nontrivial topology was confined to a compact region of a (partial) Cauchy surface bounded by a topological $2$-sphere $S$. In the context of topological censorship \cite{FSW95}, $S$ is naturally interpreted as a section of an event horizon and in four dimensions the $S^2$ topology is only natural in the light of Hawking's black hole topology theorem \cite[Sec.\ 9.2]{HawEll73}. However, since the latter fails to hold in higher dimensions, with more complicated horizon topologies  occurring  (see \cite{Emp08}, but \cite{GalSch06} for corresponding restrictions), the demand for higher dimensional Gannon-Lee type results with more natural assumptions on the topology of $S$ arises. Such a result was indeed given by Costa e Silva in \cite{Sil10} which also avoided the assumption of global hyperbolicity. 
More precisely the result was given for causally simple spacetimes, i.e.\ causal spacetimes where the causality relation is closed. However, we have recently discovered that this proof relies on an analogous false deduction: It uses that causal simplicity lifts to coverings, which is not true in general, as was explicitly shown in \cite{MinSil20}. In the latter paper Minguzzi and Costa e Silva also present a corrected result, which replaces the condition of causal simplicity by the assumption of past reflectivity, which also has to be supposed for certain covering spacetimes. The line of arguments using past reflectivity in the context of the Penrose singularity theorem was put forward in \cite{Min19a}, which also argues that this condition holds true in a black hole evaporation scenario. In the causality part of our arguments we will follow this path\footnote{
In an earlier version of this manuscript we aimed at a different strategy, replacing causal simplicity by null pseudoconvexity which \emph{does} lift to Lorentzian coverings. However, we have learned from Ettore Minguzzi that a crucial step in our argument, securing that maximal null pseudoconvexity plus strong causality implies causal simplicity, see also \cite[Thm.\ 2]{VMPRE19}, is false. An explicit counterexample was later given in \cite{HMS:21}. However, we do not know whether  null pseudoconvexity plus strong causality implies causal simplicity, a fact which would allow for an argument that allows to avoid assumptions on the causality of the cover.}.
\medskip

In this paper we extend the validity of the Gannon-Lee theorems to Lorentzian metrics of low regularity, in particlular to (certain) $C^1$-spacetimes. Having low regularity singularity theorems at hand is especially favorable in the context of extending spacetimes, as already noted in \cite[Ch.\ 8]{HawEll73}. In particular, they rule out the possibility to extend the spacetime to a complete one, even under mild regularity assumptions on the metric. Related results on $C^0$- and Lipschitz non-exendability have recently been given in \cite{Sbie18,GSL18,CK18,Sbie20}.

Indeed during the last couple of years the classical singularity theorems of Penrose, Hawking and Hawking-Penrose have been extended to $C^{1,1}$-regularity in \cite{KSSV15}, \cite{KSV15}, and \cite{GGKS18}. These results built upon extensions of Lorentzian causality theory to low regularity \cite{CG12,KSSV14,Min15,Sam16}. Most recently, Graf in \cite{Graf19} was able to further lower the regularity assumptions for the Penrose and the Hawking theorem to $C^1$. We will extend her recent techniques, in particular to the non-globally hyperbolic setting, to prove our result. 
\medskip

This note is organized in the following way: In the next section we discuss preliminaries for our work, especially the intricacies arising in $C^1$-regularity and we state our results. In section \ref{sec:c1} we lay the analytical foundations of the proof of the main theorem. In particular, we will show that any causal maximizer in a $C^1$-spacetime is a geodesic and hence a $C^2$-curve. Finally in section \ref{sec:proof} we provide new focusing results for null geodesics and collect all our results together to prove the main theorem.

\section{Preliminaries and results}\label{sec:results}

We begin by introducing our main notations and conventions, as well as some basic notions that are necessary to give a precise formulation of our results. Our main reference for all matters of Lorentzian geometry is \cite{ON83}. 

We will assume that all manifolds $M$ are smooth, Hausdorff, as well as second countable and of dimension $n$ with $n\geq 3$. We will consider Lorentzian metrics $g$ on $M$ that are of regularity at least $C^1$ with signature $(-,+,\dots,+)$. Another important regularity class is $g\in C^{1,1}$ which means that $g$ is $C^1$ and its first derivatives are locally Lipschitz continuous. A spacetime is a Lorentzian manifold with a time orientation, which we assume to be induced by a smooth timelike vector field.  

A curve $\gamma:I\to M$ defined on some interval $I$ is called timelike, causal, null, future or past directed, or spacelike, if $\gamma$ is locally Lipschitz continuous and its velocity vector $\dot\gamma$ which exists Lebesgue almost everywhere has the respective property. We denote the timelike and causal relation by $p\ll q$ and $p \leq q$, respectively and write 
$I^+(A)$ and $J^+(A)$ for the chronological and causal future of a set $A\subseteq M$. Finally we denote  the future horismos of $A$ by $E^+(A):=J^+(A)\bs I^+(A)$. The respective past versions of these sets will be denoted by $I^-$, $J^-$ and $E^-$, respectively. When we refer to these sets with regard to a particular metric $g$, it will appear in subscript, e.g.\ $E_g^+(A)$ denotes the future horismos of $A$ w.r.t.\ $g$. 

For Lorentzian metrics $g_1$, $g_2$ one says that $g_1$ has narrower lightcones than $g_2$ (respectively, $g_2$ has wider lightcones than $g_1$), denoted as $g_1 \prec g_2$, if $g_1(X,X)\leq 0$ implies $g_2(X,X)<0$ for any $0\not=X\in TM$. 

Throughout we will fix a complete Riemannian background metric $h$ and use its induced norm $\|\ \|_h$ and distance $d_h$. All local estimates will be independent of the choice of $h$.

We will denote the fundamental group of a manifold $M$ by $\pi_1(M)$. Also if $i:N \rightarrow M$ is a continuous map, the induced homomorphism of the fundamental groups is denoted by $i_\#: \pi_1(N) \to \pi_1(M)$.

\subsection{Low regularity} 

During the last couple of years the bulk of Lorentzian causality theory has been transferred to $C^{1,1}$-spacetimes, where the exponential map and convex neighbourhoods are still available. While convexity fails below that regularity \cite{HW51,SS18}, nevertheless most aspects of causality theory can be maintained even under Lipschitz regularity of the metric. Further below some significant changes occur \cite{CG12,KGSS20}, while some robust features continue to hold even in more general settings \cite{Min18,KS18,BS18,GKSS19}. In particular, for $C^1$-spacetimes the push-up principle is still valid and $I^+(A)$ is open for any set $A\subseteq M$.

However, there are two essential features one loses when going from the $C^{1,1}$-setting to $C^1$-spacetimes:  uniqueness of solutions of the geodesic equation and the local boundedness of the curvature tensor. Given the first fact, one has to make a choice concerning the definition of geodesic completeness. We will follow the natural approach of \cite{Graf19} and define a spacetime to be timelike (respectively null or causal) geodesically complete if \emph{all}\footnote{The alternative would be to only demand the existence of \emph{one} complete geodesic for every timelike (or null or causal) initial condition to the geodesic equation.} inextendible timelike (respectively null or causal) geodesics are defined on $\mathbb{R}$. 

Concerning the second issue, first note that $C^1$ is well within the maximal class of spacetimes allowing for a (stable definition of) distributional curvature, which is $g$ locally in $H^1\cap L^\infty$ \cite{GT87,LM07,SV09}. The Riemann and the Ricci tensor are then tensor distributions in $\mathcal{D}'\mathcal{T}_{3}^{1}(M)$ and $\mathcal{D}'\mathcal{T}_{2}^{0}(M)$, respectively, where we recall that 
\begin{equation}\label{eq:3}
 \mathcal{D}'\mathcal{T}_{s}^{r}(M):=\Gamma_{c}\left(M,T_{r}^{s}M\otimes\vol(M)\right)'
 =\mathcal{D}'\left(M\right)\otimes_{\,\mathcal{C}^{\infty}}\mathcal{T}_{s}^{r}\left(M\right)\,.
\end{equation}
Here $\vol(M)$ is the volume bundle over $M$, $\Gamma_c$ denotes spaces of sections with compact support and $\mathcal{D}'\left(M\right)$ is the space of scalar distributions on $M$, i.e.\  the topological dual of the space of compactly supported volume densities $\Gamma_c(M,\vol(M))$. 
We remark that, containing derivatives of the continuous connection, the distributional Riemann and Ricci tensors are of order one and that the usual coordinate formulae 
\begin{align}
 \mathrm{Riem}^m_{\;\;\;ijk} &:= \partial_j \Gamma^m_{ik}-\partial_k \Gamma^m_{ij}+\Gamma^m_{js}\Gamma^s_{ik}-\Gamma^m_{ks}\Gamma^s_{ij}\,,\\
 \mathrm{Ric}_{ij} &:= \partial_m \Gamma^m_{ij}-\partial_j \Gamma^m_{im}+\Gamma^m_{ij}\Gamma^k_{km}-\Gamma^m_{ik}\Gamma^k_{jm}
\end{align}
apply. For further details on tensor distributions see \cite[Ch.\ 3.1]{GKOS01}.
\medskip

Naturally, we define curvature bounds resp.\ energy conditions using the notion of positivity for distributions, i.e.\  $\mathcal{D}'\left(M\right)\ni u\geq 0$ ($u>0$) if 
$\langle u,\mu\rangle\geq 0$ ($>0$) for all non-negative (positive) volume densities $\mu\in \Gamma_c(M,\vol(M))$. Then $(M,g)$ is said to satisfy the strong energy condition (resp.\ to have non-negative Ricci curvature) if the scalar distribution  $\mathrm{Ric}(\mathcal{X},\mathcal{X})$ is non-negative for all smooth timelike vector fields $\mathcal{X}$. In the case of $g$ being smooth this condition coincides with the classical one, $\mathrm{Ric}(X,X)\geq 0$ for all timelike $X\in T_pM$ and all $p\in M$ by the fact that all such $X$ can be extended to smooth timelike vector fields on $M$. For the same reason the condition for $g\in C^{1,1}$ is equivalent to the condition $\mathrm{Ric}(\mathcal{X},\mathcal{X})\geq 0$ almost everywhere for all smooth timelike vector fields $\mathcal{X}$, used in the context of the $C^{1,1}$-singularity theorems. However, generalizing the null energy condition is more tricky due to the obstacles one encounters when extending null vectors. For a detailed discussion see \cite[Sec.\ 5]{Graf19}, and we define following her:

\begin{definition}[Distributional null energy condition]\label{def:nec}
A $C^1$-metric $g$ satisfies the \textit{distributional null energy condition}, if for any compact set $K\subseteq M$ and any $\delta>0$ there exists $\epsilon(\delta,K)$ such that $\text{Ric}(\mathcal{X},\mathcal{X})>-\delta$ (in the sense of distributions) for any local smooth vector field $\mathcal{X} \in \mathfrak{X}(U)$, $U\subseteq K$ with $\|\mathcal{X}\|_h=1$ and which is $\epsilon (\delta, K)$ close to a $C^1$ $g$-null vector field $\mathcal{N}$ on $U$, i.e.\  $||\mathcal{X}-\mathcal{N}||_h<\epsilon(\delta, K)$ on $U$.
\end{definition}

Again this condition is equivalent to the classical null energy condition if the metric is smooth. Moreover, in case $g\in C^{1,1}$ it is equivalent to the condition used in the $C^{1,1}$-setting i.e.\  $\text{Ric}(\mathcal{X},\mathcal{X})>0$ for all Lipschitz-continuous local null vector fields $\mathcal{X}$.
\medskip

One key technique in low regularity Lorentzian geometry is regularization. More specifically Chrusciel and Grant in \cite{CG12} have put forward a technique to regularize a continuous metric $g$ by smooth metrics $\check g_\varepsilon$ with narrower lightcones resp.\ by a net $\hat g_\varepsilon$ with wider lightcones than $g$. The basic operation  (denoted by $*$) is chartwise convolution with a standard molifier  $\rho_\epsilon$, which is globalized using cut-offs and a partition of unity, cf.\ \cite[Thm.\ 3.2.10]{GKOS01}. To manipulate the lightcones in the desired way one has to add a ``spacelike correction term''. The most recent version of this construction, which also quantifies the rate of convergence in terms of $\epsilon$ is \cite[Lem.\ 4.2]{Graf19}, which we recall here.

% The approaches taken to prove $C^{1,1}$-singularity theorems all relied on regularization of the metric and results on convergence of the curvature tensors of such regularized metrics. In these ideas are
% extended to $C^1$-metrics, where one encounters the additional difficulty of dealing with distributional curvatures and hence has to take extra care when dealing with convergence of the curvature of the
% regularized metrics. At the heart of this approach stands Lemma 4.2 in \cite{Graf19}, which we repeat here. The operation  denotes chart-wise convolution, patched together on $M$ via partitions 
% of unity and $\rho_\epsilon$ denotes a standard mollifier.

\begin{Lemma}\label{lem:2.2}
	Let $(M,g)$ be a spacetime with a $C^1$-Lorentzian metric. Then for any $\epsilon>0$, there exist smooth Lorentzian metrics $\check{g}_\epsilon$ and $\hat{g}_\epsilon$ with $\check{g}_\epsilon \prec g \prec \hat{g}_\epsilon$, both
%time-orientable by the same smooth timelike vector field as $g$, on M 
converging to $g$ in $C^1_\text{loc}$. Additionally, on any compact set $K$ there is $c_k>0$ such that for all small 
$\epsilon$ 
\begin{equation}\label{eq:neweps}
 \| \check{g}_\epsilon - g * \rho_\epsilon\|_{\infty,K} \leq c_K \epsilon
 \quad\text{and}\quad
 \| \hat{g}_\epsilon - g * \rho_\epsilon\|_{\infty,K} \leq c_K \epsilon\,.
\end{equation}
\end{Lemma}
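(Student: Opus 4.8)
The plan is to follow the now-standard Chru\'sciel--Grant regularization scheme, keeping careful track of the additive correction term so that the quantitative bound \eqref{eq:neweps} falls out. First I would recall the basic operation $g \mapsto g * \rho_\epsilon$: working in a fixed atlas, subordinate to a locally finite cover, one convolves the metric components chartwise with a rescaled mollifier $\rho_\epsilon$, then glues the pieces back together with a smooth partition of unity. Since $g \in C^1$, standard mollification estimates (see \cite[Thm.\ 3.2.10]{GKOS01}) give $g * \rho_\epsilon \to g$ in $C^1_{\mathrm{loc}}$; the convolution itself is not the final metric because it does not manipulate the lightcones, but it is the ``reference'' metric appearing in \eqref{eq:neweps}. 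On a compact set $K$, because $g$ is $C^1$ one even has $\|g * \rho_\epsilon - g\|_{\infty,K} = O(\epsilon)$, which I will use below to guarantee $g * \rho_\epsilon$ is Lorentzian for small $\epsilon$ and to absorb error terms.

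Next I would introduce the correction. Define $\check g_\epsilon := g * \rho_\epsilon - \epsilon\, \phi\, h$ and $\hat g_\epsilon := g * \rho_\epsilon + \epsilon\, \phi\, h$ for a suitable smooth positive bump/weight function $\phi$ (needed only to make the global construction well-defined on a non-compact $M$; on a compact exhaustion piece it can be taken to be a constant). Subtracting a positive multiple of the Riemannian metric $h$ opens the lightcones of a Lorentzian metric and adding it narrows them; the key point is to check that the $O(\epsilon)$ gap created by the $\pm\epsilon\phi h$ term dominates the discrepancy between $g*\rho_\epsilon$ and $g$. Concretely, for $0 \neq X \in TM$ with $\check g_\epsilon(X,X) \le 0$ one writes $g(X,X) = \check g_\epsilon(X,X) + (g - g*\rho_\epsilon)(X,X) + \epsilon \phi\, h(X,X)$; the first term is $\le 0$, and one needs the last term to strictly dominate the middle one, which holds on compacta precisely because $\|g - g*\rho_\epsilon\|_{\infty,K}$ is $o(\epsilon)$ on the relevant cone sectors, or more robustly because one can arrange the correction to beat the mollification error. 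This yields $\check g_\epsilon \prec g$, and symmetrically $g \prec \hat g_\epsilon$. Finally, $\|\check g_\epsilon - g*\rho_\epsilon\|_{\infty,K} = \epsilon \|\phi h\|_{\infty,K} \le c_K \epsilon$, which is exactly \eqref{eq:neweps}, and likewise for $\hat g_\epsilon$; the $C^1_{\mathrm{loc}}$ convergence is inherited from that of $g*\rho_\epsilon$ since the correction and its first derivatives are $O(\epsilon)$ uniformly on compacta.

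The step I expect to be the main obstacle — really the only nontrivial point — is the cone comparison on \emph{all} of $M$ simultaneously, since on a noncompact manifold the mollification error $\|g - g*\rho_\epsilon\|_\infty$ need not be $O(\epsilon)$ globally and the naive estimate $g(X,X) \le (g - g*\rho_\epsilon)(X,X) - \epsilon\phi\, h(X,X)$ can fail if $\phi$ is too small relative to the local error. The remedy, which is what Graf carries out in \cite[Lem.\ 4.2]{Graf19}, is to choose the weight $\phi$ and the mollification scale adaptively via the partition of unity: one exhausts $M$ by compacta $K_j$, on each of which the $C^1$-modulus of continuity of $g$ controls the error, and patches the local scales together so that the strict inequality $\check g_\epsilon \prec g$ holds uniformly while the bound in \eqref{eq:neweps} on each fixed compact $K$ only ever sees finitely many patches and hence a single constant $c_K$. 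I would present this patching at the level of detail of citing \cite[Thm.\ 3.2.10]{GKOS01} for the globalized convolution and \cite{CG12} for the cone manipulation, and simply verify that the extra $\epsilon$-linear term does not spoil either the $C^1_{\mathrm{loc}}$ convergence or the sup-norm estimate — both of which are immediate once the construction is set up.
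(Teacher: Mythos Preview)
The paper does not actually prove this lemma; it is quoted verbatim from \cite[Lem.\ 4.2]{Graf19}. Your sketch is therefore an attempt to reconstruct Graf's argument, and the overall strategy --- mollify chartwise, then add an $O(\epsilon)$ Riemannian correction to manipulate the cones --- is indeed the right one, and your discussion of the noncompact patching via an exhaustion is on point.

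However, your definitions of $\check g_\epsilon$ and $\hat g_\epsilon$ have the signs reversed. You correctly state that subtracting a positive multiple of $h$ \emph{widens} the lightcones and adding one \emph{narrows} them, but then you set $\check g_\epsilon := g * \rho_\epsilon - \epsilon\phi h$, which widens the cones of $g*\rho_\epsilon$ rather than narrowing them below those of $g$. Your verification then breaks: from
\[
g(X,X) = \check g_\epsilon(X,X) + (g - g*\rho_\epsilon)(X,X) + \epsilon \phi\, h(X,X)
\]
with $\check g_\epsilon(X,X) \le 0$ and the last term dominating the middle one, you obtain only $g(X,X) > \check g_\epsilon(X,X)$, which says nothing about the sign of $g(X,X)$. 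With the correct choice $\check g_\epsilon := g*\rho_\epsilon + \epsilon\phi h$ (and $\hat g_\epsilon := g*\rho_\epsilon - \epsilon\phi h$) the identity becomes $g(X,X) = \check g_\epsilon(X,X) + (g - g*\rho_\epsilon)(X,X) - \epsilon\phi h(X,X)$, and now the domination of the middle term by the last does force $g(X,X) < 0$, giving $\check g_\epsilon \prec g$. The estimate \eqref{eq:neweps} is unaffected either way, since it involves only the norm of the correction. Note also that your parenthetical claim that $\|g - g*\rho_\epsilon\|_{\infty,K}$ is $o(\epsilon)$ ``on the relevant cone sectors'' is not available for a merely $C^1$ metric; one only has $O(\epsilon)$ from local Lipschitzness of $g$, so the argument genuinely requires choosing $\phi$ large enough (locally) to beat the Lipschitz constant, which is precisely the content of the adaptive patching you describe.
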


A main step in the proof of singularity theorems in low regularity is to show that the energy condition (Definition \ref{def:nec}, in our case) implies that the regularized metrics $\hat g_\epsilon$ and/or $\check g_\epsilon$ violate the classical energy conditions (the NEC, in our case) only by a small amount --- small enough, such that null geodesics still tend to focus. Technically this is done by a Friedrichs-type lemma, which in the present case is \cite[Lem.\ 4.5]{Graf19}, and draws essentially from \eqref{eq:neweps}. The corresponding result is then \cite[Lem.\ 5.5]{Graf19}:

\begin{Lemma}[Surrogate energy condition]\label{lem:sec} Let $M$ be a $C^1$-spacetime where the distributional null energy condition holds. Given any compact set $K\subseteq M$ and $c_2>c_1>0$, then for all $\delta>0$ there is $\epsilon_0>0$ such that $\forall \epsilon<\epsilon_0$
\begin{align}\nonumber
\mathrm{Ric}[\check{g}_\epsilon](X,X) > -\delta\quad 
  \forall X \in TM|_K  &\text{ with } \check{g}_\epsilon (X,X)=0\\
  &\text{ and } 0<c_1 \leq ||X||_h\leq c_2\,.
\end{align}
\end{Lemma}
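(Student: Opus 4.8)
The plan is to argue by contradiction, following \cite[Lem.\ 5.5]{Graf19}: use compactness to localize to a single point $p$ and a single $g$-null direction there, and then bridge between the \emph{smooth} Ricci tensors $\mathrm{Ric}[\check g_\epsilon]$ and the \emph{distributional} tensor $\mathrm{Ric}[g]$ by means of the Friedrichs-type estimate \cite[Lem.\ 4.5]{Graf19}, which rests on \eqref{eq:neweps}. Assume the conclusion fails: there are $\delta>0$, $\epsilon_n\downarrow0$, points $p_n\in K$ and vectors $X_n\in T_{p_n}M$ with $\check g_{\epsilon_n}(X_n,X_n)=0$, $c_1\le\|X_n\|_h\le c_2$, but $\mathrm{Ric}[\check g_{\epsilon_n}](X_n,X_n)\le-\delta$. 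Since $\{V\in TM|_K:\ c_1\le\|V\|_h\le c_2\}$ is compact, along a subsequence $p_n\to p\in K$ and $X_n\to X$ with $c_1\le\|X\|_h\le c_2$, so $X\ne0$; and $\check g_{\epsilon_n}\to g$ in $C^1_{\mathrm{loc}}$ (Lemma~\ref{lem:2.2}) forces $g(X,X)=0$. As the assertion is quadratic and homogeneous in $X$, we may pass to the $h$-unit vectors $\hat X_n:=X_n/\|X_n\|_h$ at the price of replacing $\delta$ by $\delta_1:=\delta/c_2^2$: then $\check g_{\epsilon_n}(\hat X_n,\hat X_n)=0$, $\|\hat X_n\|_h=1$, $\hat X_n\to\hat X:=X/\|X\|_h$ ($g$-null and $h$-unit at $p$), and $\mathrm{Ric}[\check g_{\epsilon_n}](\hat X_n,\hat X_n)\le-\delta_1$. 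Fix a chart around a compact neighbourhood $L_0$ of $p$ with $p$ in its interior; henceforth all convolutions $*\rho_\epsilon$ are chartwise, using the mollifier from the construction of $\check g_\epsilon$ in \eqref{eq:neweps}, and $\mathcal W_n$ denotes the vector field with constant chart components equal to those of $\hat X_n$.

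The estimate \cite[Lem.\ 4.5]{Graf19} (using \eqref{eq:neweps} together with $\check g_\epsilon,\,g*\rho_\epsilon\to g$ in $C^1_{\mathrm{loc}}$) gives $\mathrm{Ric}[\check g_\epsilon]-\mathrm{Ric}[g]*\rho_\epsilon\to0$ in $C^0(L_0)$ componentwise, where $\mathrm{Ric}[g]*\rho_\epsilon$ is the smooth tensor obtained by mollifying the distributional Ricci tensor. Contracting with the $h$-unit vectors $\hat X_n$ at $p_n\in L_0$ and using that $\hat X_n^i\hat X_n^j$ are constants,
\begin{equation*}
 \mathrm{Ric}[\check g_{\epsilon_n}](\hat X_n,\hat X_n)
 =\big(\mathrm{Ric}[g]*\rho_{\epsilon_n}\big)(\hat X_n,\hat X_n)\big|_{p_n}+o(1)
 =\big\langle\mathrm{Ric}[g](\mathcal W_n,\mathcal W_n),\,\rho_{\epsilon_n}(p_n-\cdot)\big\rangle+o(1),
\end{equation*}
the pairing on the right being understood chartwise (the positive chart Jacobian is harmless throughout). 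Write $\mathcal W_n=\|\mathcal W_n\|_h\,\mathcal Z_n$ with $\mathcal Z_n:=\mathcal W_n/\|\mathcal W_n\|_h$ smooth and everywhere $h$-unit, $\mathcal Z_n(p_n)=\hat X_n$. Since $g_{p_n}(\hat X_n,\hat X_n)=(g-\check g_{\epsilon_n})_{p_n}(\hat X_n,\hat X_n)\to0$ and $g$ is $C^1$, for $n$ large $\mathcal Z_n$ is, on a fixed small ball $B\ni p_n$ contained in $L_0$, within $\eta:=\epsilon(\delta_1/2,L_0)$ of a $C^1$ $g$-null $h$-unit vector field on $B$ (obtained e.g.\ by projecting $\mathcal Z_n$ onto the $g$-null $h$-unit cone; the radius of $B$ is fixed first, depending on $\eta$ and uniform $C^1$-bounds, then $n$ is taken large). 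By Definition~\ref{def:nec}, $\mathrm{Ric}[g](\mathcal Z_n,\mathcal Z_n)>-\delta_1/2$ on $B$; since a non-negative distribution is a non-negative measure, this says $\mathrm{Ric}[g](\mathcal Z_n,\mathcal Z_n)+\delta_1/2\ge0$ in $\mathcal D'(B)$.

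For $\epsilon_n$ smaller than the radius of $B$, $\rho_{\epsilon_n}(p_n-\cdot)$ is supported in $B$ and $\|\mathcal W_n\|_h^2\,\rho_{\epsilon_n}(p_n-\cdot)\ge0$, hence
\begin{equation*}
 \big\langle\mathrm{Ric}[g](\mathcal W_n,\mathcal W_n),\,\rho_{\epsilon_n}(p_n-\cdot)\big\rangle
 =\big\langle\mathrm{Ric}[g](\mathcal Z_n,\mathcal Z_n),\,\|\mathcal W_n\|_h^2\,\rho_{\epsilon_n}(p_n-\cdot)\big\rangle
 \ge-\tfrac{\delta_1}{2}\big(\|\mathcal W_n\|_h^2*\rho_{\epsilon_n}\big)(p_n),
\end{equation*}
and $\big(\|\mathcal W_n\|_h^2*\rho_{\epsilon_n}\big)(p_n)\to\|\hat X_n\|_h^2=1$ since $h$ is smooth and $\hat X_n$ bounded. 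Combining the last two displays, $\mathrm{Ric}[\check g_{\epsilon_n}](\hat X_n,\hat X_n)\ge-\tfrac{\delta_1}{2}+o(1)>-\delta_1$ for $n$ large, contradicting $\mathrm{Ric}[\check g_{\epsilon_n}](\hat X_n,\hat X_n)\le-\delta_1$. This proves the lemma.

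I expect the genuine work to be the Friedrichs estimate invoked in the second paragraph --- passing from the distributional $\mathrm{Ric}[g]$ to the pointwise smooth $\mathrm{Ric}[\check g_{\epsilon_n}]$ --- which is precisely why it is isolated as \cite[Lem.\ 4.5]{Graf19}: without the quantitative bound \eqref{eq:neweps} one has \emph{no} control on $\|\mathrm{Ric}[\check g_{\epsilon_n}]\|_\infty$ (it may blow up as $\epsilon_n\to0$, since $\check g_{\epsilon_n}$ converges only in $C^1$), so the pointwise conclusion is not a soft consequence of $C^1$-convergence. The remaining steps are routine but deserve care: the compactness extraction and the homogeneity reduction to $h$-unit vectors; the uniform $C^1$-approximability of $\mathcal Z_n$ by $g$-null $h$-unit fields near $p_n$ (the relevant circle of ideas on null-field extensions is in \cite[Sec.\ 5]{Graf19}); covering $K$ by finitely many charts; and the bookkeeping of the chart density when passing between the distributional inequality (formulated with volume densities) and ordinary chartwise convolution.
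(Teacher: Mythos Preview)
The paper does not give its own proof of this lemma: it is stated verbatim as \cite[Lem.\ 5.5]{Graf19} and simply cited, with the surrounding text pointing to the Friedrichs-type estimate \cite[Lem.\ 4.5]{Graf19} and the bound \eqref{eq:neweps} as the essential ingredients. Your write-up is a faithful and essentially correct reconstruction of that argument---contradiction plus compactness to localize, the Friedrichs lemma to replace $\mathrm{Ric}[\check g_{\epsilon_n}]$ by $\mathrm{Ric}[g]*\rho_{\epsilon_n}$ pointwise, and then Definition~\ref{def:nec} applied to a smooth $h$-unit extension near the limit point---so there is nothing to compare beyond noting that you have filled in what the paper outsources.
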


Here $\check{g}_\epsilon$ is as in Lemma \ref{lem:2.2} and $\mathrm{Ric}[\check{g}_\epsilon]$ is its Ricci tensor. We will use this result in an essential way, when showing compactness of the horismos of a certain set, which is needed for the causal/analytic part of the proof of our Gannon-Lee theorem. A main difference to the arguments in \cite{Graf19} is that we do not assume global hyperbolicity, but are able to compensate for it by assuming non-branching of null maximizers.
Formally we define:

\begin{definition}\label{def:branch}
Let $M$ be a spacetime and let $\gamma:[0,1]\to M$ be a maximizing null curve. We say that $\gamma$ branches if there exists another maximizing null curve $\sigma:[0,1]\to M$ such that $\gamma(t)=\sigma(t)$ for all $0\leq t\leq a$ for some $0<a<1$ and $\gamma(t)\neq \sigma(t)$ for all $1\geq t>a $. The point $\gamma(a)$ is called branching point. 
If no maximizing null curve branches, we say that there is \textit{no null branching} in $M$. 
\end{definition}

In light of the fact that (even) for $C^1$-metrics causal maximizers are geodesics (to be proven in Theorem \ref{thm:maxgeod}, below), we see that if null branching were to occur in $\gamma(a)$,  there would be (at least) two different, maximizing solutions to the geodesic equations with initial values $\gamma(a)$ and $\gamma'(a)$. 

Generally speaking, in the low regularity Riemannian setting, branching of maximizers is associated with unbounded sectional curvature from below. More precisely, in length spaces with a lower curvature bound branching does not occur \cite[Lem.\ 2.4]{Shi93}. In smooth Lorentzian manifolds sectional curvature bounds, although more delicate to handle, are still characterized by triangle comparison \cite{AB08} and in the setting of Lorentzian length spaces \cite {KS18} synthetic curvature bounds extending sectional curvature bounds for smooth spacetimes have been introduced. In \cite[Section 4]{KS18} the authors show that a synthetic curvature bound from below prevents the branching of timelike maximizers. Unfortunately it is not clear at the moment how one could extend such a result to triangles with null sides.
% or otherwise how to generalize the setting of \cite{AB08} to $C^1$-spacetimes and have a corresponding result there. 

However, in a merely $C^1$-spacetime the curvature is generically not locally bounded and so it seems natural that an additional condition as in Definition \ref{def:branch} is needed.
Indeed, this condition enters in an essential way into our arguments. It will be a topic of future investigations to relate null branching to curvature bounds. 
\medskip

\subsection{Results} 
Next we introduce the particular notions needed for the formulation of our results. In spirit they reflect Gannon's \cite{Gan75} assumptions on the spacetime, inspired
by asymptotic flatness, however, we stay close to the formulations of \cite[Sec.\ 2]{Sil10}.

\begin{definition}
An \emph{asymptotically regular hypersurface} is a spacelike, smooth, connected partial Cauchy surface $\Sigma$ which possesses an enclosing surface $S$,
i.e.\ a compact, connected submanfiold of codim.\ 2 in $M$ with the properties
\begin{enumerate}
 \item $S$ separates $\Sigma$ into two (open, sub-) manifolds $\Sigma_+, \Sigma_-$ such that $  
  \bar \Sigma_+ $ is non-compact and $\Sigma_-$ connected,
 \item the map $h_\#:\pi_1(S) \to \pi_1 (\bar \Sigma_+ )$, induced by the inclusion $h: S \to \bar \Sigma_+$ is surjective,
 \item $k_- >0 $ on $S$, i.e.\ $S$ is inner trapped.
\end{enumerate}

We further say that a surface $\Sigma$ admits a \emph{piercing}, if there exists a timelike vector field $X$ on $M$ such that every integral curve of $X$ meets $\Sigma$ exactly once.
\end{definition}

To elaborate on this definition, we first fix some notation.
Throughout let us denote the future directed timelike unit vector field perpendicular to $\Sigma$ near $S$ by $U$. Further as $S$ is a hypersurface in $\Sigma$ we denote by $N_\pm$ the unit normals to $S$ in $\Sigma$ such that $N_-$ points into $\Sigma_-$ and $N_+$ into $\Sigma_+$. We obtain future directed null normals to $S$ via $K_\pm := U|_S + N_\pm$. We will refer to $K_-(p)$ as the ingoing null vector. The convergence of a point in $S$ is defined via $k_\pm:=g(H_p,K_\pm(p))$, where $H_p$ is the mean-curvature vector field of $\Sigma$ at $p$. Observe that since $g\in C^1$, $H_p$ is still continuous and all corresponding formulae hold ``classically''.

Also note that any piercing of $\Sigma$ induces a continuous, open map $\rx:M \to \Sigma$, which maps any point $p$ in $M$ to the unique intersection of the integral curve of $X$ through $p$ with $\Sigma$. However, existence of a piercing is a strictly weaker condition than global hyperbolicity, see e.g.\ \cite[p.\ 4, 2nd paragraph]{Sil10}, although it implies that $M$ is homeomorphic to the product $\R\times\Sigma$ as pointed out in \cite[below 3.2]{MinSil20}. 
\medskip

We are now ready to state our main result, a Gannon-Lee theorem for non-globally hyperbolic $C^1$-spacetimes without null branching. We will discuss several of its special cases below.

\begin{Theorem}[$C^1$-Gannon-Lee theorem]\label{thm:glc1}
Let $(M,g)$ be a past reflecting, null geodesically complete $C^{1}$-spacetime without null branching and such that the distributional null energy condition holds. Let $\Sigma$ be an asymptotically regular hypersurface (with enclosing surface $S$) admitting a piercing. Further let one of the following possibilities hold
\begin{enumerate}[(i)]
	\item any covering spacetime of $(M,g)$ is past reflecting, or
	\item $S$ is simply connected and the universal covering spacetime of $(M,g)$ is past reflecting.
\end{enumerate}
Then the map $i_\# :\pi_1(S) \to \pi_1(\Sigma)$, induced by the inclusion $i: S \to \Sigma$, is surjective.
\end{Theorem}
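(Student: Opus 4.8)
\emph{The plan is to argue by contradiction.} Assuming $i_\#$ is not surjective, I would pass to a covering spacetime in which the ``excess topology'' forces a second asymptotic end, while a Penrose-type focusing argument confines the future horismos of one end to a compact set; playing the two facts against each other via the lifted piercing produces the contradiction. Since the covering inherits null geodesic completeness, the actual content is that $\operatorname{im}(i_\#)\subsetneq\pi_1(\Sigma)$ is incompatible with the geometric hypotheses on $(M,g)$.

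\emph{Construction of the covering.} Set $H_0:=\operatorname{im}(i_\#)$; if $\pi_1(\Sigma)$ is trivial there is nothing to prove, so assume $H_0\subsetneq\pi_1(\Sigma)$. I would take the connected covering $\Sigma'\to\Sigma$ associated with $H_0$ (the universal cover in case (ii), where $H_0=\{1\}$). As the piercing yields a homeomorphism $M\cong\R\times\Sigma$, it lifts to a covering $p:M'\to M$ with $M'\cong\R\times\Sigma'$; pulling back $g$, the time orientation and $X$ produces a $C^1$-spacetime $(M',g')$ with a lifted piercing $\rho':M'\to\Sigma'$. Since $p$ is a local isometry, $(M',g')$ is again null geodesically complete, satisfies the distributional null energy condition, and has no null branching (this lifts to $M'$, as $p$ is a local isometry and Definition \ref{def:branch} is local in nature); it is past reflecting by hypothesis (i), resp.\ (ii); and $\Sigma'$ is a smooth spacelike acausal partial Cauchy surface in $M'$.

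\emph{Lifted data and a compact horismos.} Because $\operatorname{im}(i_\#)=H_0$ and, by surjectivity of $h_\#$, the image of $\pi_1(\bar\Sigma_+)$ in $\pi_1(\Sigma)$ also equals $H_0$, both $S$ and $\bar\Sigma_+$ lift homeomorphically; I would fix a lift $\Sigma_+'$ of $\bar\Sigma_+$, which is closed in $\Sigma'$ and non-compact, with edge $S'\cong S$ still inner trapped (the mean-curvature data is continuous as $g\in C^1$, and $p$ is a local isometry). Since $H_0\subsetneq\pi_1(\Sigma)$, the fibre of $p$ has more than one element, so $p^{-1}(\bar\Sigma_+)$ has a component $\Sigma_+''$ disjoint from $\Sigma_+'$ which, being a covering of the non-compact $\bar\Sigma_+$, is itself non-compact. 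Next I would study $A:=\partial I^+_{g'}(\Sigma_+')$, a closed achronal locally Lipschitz hypersurface with $\Sigma_+'\subseteq A$: any point of $A$ off $\Sigma_+'$ lies on a null generator issuing from the edge $S'$ in the \emph{ingoing} null normal direction (points just off the interior of $\Sigma_+'$, and the outgoing null normal direction which points toward the $\Sigma_+'$-side, already lie in $I^+_{g'}(\Sigma_+')$); by Theorem \ref{thm:maxgeod} these generators are $C^2$-geodesics, unique given their initial point on $S'$ by absence of null branching, hence varying continuously with that point. The key step will be that this ingoing congruence focuses uniformly: applying the $C^1$ focusing technique of \cite{Graf19} with the regularizations $\check g_\epsilon$ of Lemma \ref{lem:2.2} and the surrogate energy condition of Lemma \ref{lem:sec}, and using $k_->0$ on the compact $S'$ together with null completeness, every ingoing null geodesic from $S'$ develops a focal point within affine length $\leq\ell_0<\infty$ and leaves $A$ thereafter, while past reflectivity excludes generators of $A$ without past endpoint on $S'$ and keeps the picture inside a compact region. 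Consequently $A\setminus\Sigma_+'$ is contained in the compact set $\mathcal{K}$ swept out by the ingoing null geodesic segments from $S'$ of affine length $\leq\ell_0$, so $A=\Sigma_+'\cup(A\cap\mathcal{K})$ with $A\cap\mathcal{K}$ compact.

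\emph{Contradiction.} Since $\Sigma'$ is acausal and the lifted $X$ is timelike, $\rho'|_A$ is injective; as $A$ is an achronal topological hypersurface, $\rho'|_A$ maps onto an open subset of $\Sigma'$ (the graph property of achronal boundaries), and as $\rho'$ restricts to the identity on $\Sigma'$ we obtain $\rho'(A)=\Sigma_+'\cup\rho'(A\cap\mathcal{K})$. This set is non-empty, open, and closed in $\Sigma'$ (a union of the closed $\Sigma_+'$ with the compact $\rho'(A\cap\mathcal{K})$), hence all of $\Sigma'$ by connectedness. Then $\Sigma'\setminus\Sigma_+'\subseteq\rho'(A\cap\mathcal{K})$ would be compact, contradicting that it contains the non-compact $\Sigma_+''$. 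Therefore $i_\#$ is surjective. The hard part will be the uniform focusing of the ingoing congruence: it needs Graf's $C^1$-regularization machinery (to run Raychaudhuri's equation on $\check g_\epsilon$ and carry the focal-length bound back to $g$) together with the non-globally-hyperbolic causality input — past reflectivity and no null branching — to guarantee that $A$ is generated as described and that $A\setminus\Sigma_+'$ really is confined to a compact set; everything else is soft topology and standard behaviour of achronal boundaries under a piercing.
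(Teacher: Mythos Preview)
Your proposal is correct and follows essentially the same route as the paper: pass to the covering of $M$ determined by $i_\#(\pi_1(S))$, lift the piercing and $\bar\Sigma_+$, show via the $C^1$-focusing machinery (regularization $\check g_\epsilon$, the surrogate energy condition of Lemma~\ref{lem:sec}, past reflectivity to rule out past-inextendible generators, and no null branching to control the approximation of $g$-maximizers by $\check g_\epsilon$-maximizers) that the part of $\partial I^+(\Sigma_+')$ off $\Sigma_+'$ projects into a compact set, and contradict this with a second non-compact lift of $\bar\Sigma_+$. The paper factors the argument into a standalone analytic statement (Proposition~\ref{prop:cp-c1}: $\overline{\Sigma_-}$ is compact) which is then applied in the cover (Lemma~\ref{lem:final}), whereas you work directly in the cover and replace the paper's boundary-point argument in step~(3) by a clean open-and-closed argument for $\rho'(A)=\Sigma'$; the substance is the same.
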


Note that for $C^{1,1}$-metrics the geodesic equation is uniquely solvable and hence there can be no null branching. Moreover using that the distributional null energy condition reduces to the ``almost everwhere condition'' for $C^{1,1}$-metrics we immediately obtain the following $C^{1,1}$-Gannon-Lee theorem.

\begin{Corollary}[$C^{1,1}$-Gannon-Lee theorem]
Let $(M,g)$ be a past reflecting, null geodesically complete $C^{1,1}$-spacetime such that the null energy condition $\text{Ric}(\mathcal{X},\mathcal{X}) \geq 0 $ holds for all local Lipschitz-continuous null vector fields $\mathcal{X}$. Let $\Sigma$ be an asymptotically regular hypersurface (with enclosing surface $S$) admitting a piercing. Further let one of the assumptions (i) or (ii) of Theorem \ref{thm:glc1} hold.
Then the map $i_\# :\pi_1(S) \to \pi_1(\Sigma)$ is surjective.
\end{Corollary}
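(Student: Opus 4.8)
The plan is to deduce this statement directly from Theorem~\ref{thm:glc1} by checking that each of its hypotheses is met in the $C^{1,1}$ setting. Since $C^{1,1}\subseteq C^1$, the ambient regularity requirement is automatic, and past reflectivity, null geodesic completeness, the existence of an asymptotically regular hypersurface $\Sigma$ with enclosing surface $S$ admitting a piercing, and alternative (i) or (ii) on the causality of the relevant coverings are all assumed verbatim. Thus only the two genuinely regularity-dependent hypotheses of Theorem~\ref{thm:glc1} remain to be verified: the absence of null branching, and the distributional null energy condition of Definition~\ref{def:nec}.

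For the absence of null branching, recall that for $g\in C^{1,1}$ the Christoffel symbols are locally Lipschitz, so the geodesic equation is an ODE with locally Lipschitz right-hand side and the Picard--Lindel\"of theorem gives uniqueness of solutions for prescribed initial position and velocity. By Theorem~\ref{thm:maxgeod} every maximizing null curve is, up to affine reparametrization, a $C^2$ geodesic; hence two maximizing null curves that agree on an initial segment $[0,a]$ have the same $1$-jet at $a$ and therefore coincide on their whole common domain. This contradicts the defining property in Definition~\ref{def:branch}, so no maximizing null curve branches, and the no-null-branching hypothesis holds.

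For the distributional null energy condition, the key observation is that $g\in C^{1,1}$ is precisely the regularity at which the Ricci tensor is represented by a locally bounded tensor field (the Christoffel symbols are differentiable almost everywhere by Rademacher's theorem, so the coordinate formula for $\mathrm{Ric}$ defines an $L^\infty_{\mathrm{loc}}$ object). Fix a compact $K$ and let $C_K:=\|\mathrm{Ric}\|_{\infty,K}$. For $\mathcal{X}\in\mathfrak{X}(U)$, $U\subseteq K$, with $\|\mathcal{X}\|_h=1$ and $\|\mathcal{X}-\mathcal{N}\|_h<\epsilon$ for some $C^1$ $g$-null field $\mathcal{N}$, one has $\mathrm{Ric}(\mathcal{N},\mathcal{N})\geq 0$ almost everywhere by the assumed almost-everywhere null energy condition applied to the locally Lipschitz field $\mathcal{N}$, and bilinearity gives $|\mathrm{Ric}(\mathcal{X},\mathcal{X})-\mathrm{Ric}(\mathcal{N},\mathcal{N})|\leq C_K(\|\mathcal{X}\|_h+\|\mathcal{N}\|_h)\,\epsilon\leq 3C_K\epsilon$ for small $\epsilon$; choosing $\epsilon(\delta,K)<\delta/(3C_K)$ yields $\mathrm{Ric}(\mathcal{X},\mathcal{X})>-\delta$ almost everywhere, hence in the distributional sense. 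Thus Definition~\ref{def:nec} holds and Theorem~\ref{thm:glc1} applies, giving the surjectivity of $i_\#\colon\pi_1(S)\to\pi_1(\Sigma)$.

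The only step that is not a purely verbatim transfer of hypotheses is this last one, and even there the work is minimal; the real content of the corollary is entirely contained in Theorem~\ref{thm:glc1}. I would not expect a genuine obstacle here --- the role of the $C^{1,1}$ assumption is exactly to supply unique solvability of the geodesic equation and local boundedness of the curvature for free, which are precisely the two features one loses in passing to mere $C^1$.
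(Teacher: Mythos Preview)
Your proof is correct and follows exactly the approach the paper takes: the paper simply remarks (in the paragraph preceding the corollary) that for $C^{1,1}$-metrics the geodesic equation is uniquely solvable so no null branching occurs, and that the distributional null energy condition reduces to the almost-everywhere condition, whence Theorem~\ref{thm:glc1} applies. You have merely spelled out these two reductions in more detail, which is fine.
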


Going back to $C^1$ and assuming global hyperbolicity we clearly can skip past reflectivity and any assumption on the covering spacetime. Due to results in \cite{Graf19}, the assumption of no null branching is also obsolete. Moreover, in this case also some of the assumptions in \cite{Sil10} can be dropped, as they are implied by the existence of a Cauchy surface.

\begin{Corollary} [Globally hyperbolic $C^1$-Gannon-Lee theorem]\label{cor:ghc1}
Let $(M,g)$ be a globally hyperbolic, null geodesically complete $C^{1}$-spacetime such that the distributional null energy condition holds. Further let $\Sigma$ be an asymptotically regular hypersurface (with enclosing surface $S$), then the map $i_\# :\pi_1(S) \to \pi_1(\Sigma)$ is surjective.
\end{Corollary}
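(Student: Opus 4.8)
The plan is to derive Corollary~\ref{cor:ghc1} from the main Theorem~\ref{thm:glc1} by checking that, in the presence of a Cauchy surface, all of the extra hypotheses of that theorem are either automatically satisfied or may be dispensed with. First, a globally hyperbolic spacetime is causally continuous, hence in particular past reflecting, so the reflectivity assumption on $(M,g)$ holds. Secondly, if $S_0\subseteq M$ is a smooth spacelike Cauchy hypersurface and $\pi\colon\widetilde M\to M$ is any covering, then $\pi^{-1}(S_0)$ is a smooth spacelike hypersurface met exactly once by every inextendible causal curve of $\widetilde M$: such a curve projects to an inextendible causal curve of $M$ (projections of inextendible curves along a covering are inextendible), which crosses $S_0$ at a unique parameter value, and by the covering property the lifted curve crosses $\pi^{-1}(S_0)$ at precisely that value. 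Hence every covering spacetime of $(M,g)$ is again globally hyperbolic, in particular past reflecting, and so alternative~(i) of Theorem~\ref{thm:glc1} holds automatically, with no hypothesis imposed on the (universal) cover.

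It remains to dispose of the ``piercing'' and ``no null branching'' assumptions. The former is subsumed: a globally hyperbolic spacetime admits a Cauchy temporal function, and a short construction then yields a complete timelike vector field whose integral curves each meet the partial Cauchy surface $\Sigma$ exactly once, i.e.\ a piercing of $\Sigma$ --- consistent with the remark above that admitting a piercing is strictly weaker than global hyperbolicity --- so the projection $\rx\colon M\to\Sigma$ is at our disposal. For the branching assumption one inspects the proof of Theorem~\ref{thm:glc1}: it is invoked there only as a substitute for global hyperbolicity, precisely to secure compactness of the future horismos $E^+_g(\,\cdot\,)$ of the relevant set in the causal/analytic step. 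Under global hyperbolicity this compactness is classical, following --- exactly as in \cite{Graf19} --- from null geodesic completeness together with the surrogate energy condition of Lemma~\ref{lem:sec} and the focusing behaviour of the approximations $\check g_\epsilon$, via the standard achronal-boundary arguments available in a globally hyperbolic spacetime. Since the existence of a Cauchy surface also provides, for free, the remaining conditions that entered only through \cite{Sil10}, all hypotheses of Theorem~\ref{thm:glc1} are met, and its conclusion is precisely the surjectivity of $i_\#\colon\pi_1(S)\to\pi_1(\Sigma)$ asserted in the corollary.

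The step I expect to be the main obstacle is the bookkeeping implicit above: one must trace the full proof of Theorem~\ref{thm:glc1} and confirm that the no-branching hypothesis --- and the piercing, beyond its role in producing $\rx$ --- is used \emph{only} at points where global hyperbolicity supplies an alternative; in particular that it does not intervene in Theorem~\ref{thm:maxgeod} (that causal maximizers are geodesics) or in the focusing estimates except in ways already covered by the globally hyperbolic arguments of \cite{Graf19}. The remaining ingredients --- past reflectivity of $M$ and of its covers, and the existence of the piercing --- are routine consequences of global hyperbolicity together with standard causality theory.
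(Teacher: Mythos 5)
Your proposal is, in its final form, essentially the route the paper takes: you cannot actually deduce the corollary from the \emph{statement} of Theorem~\ref{thm:glc1} (as your opening sentence suggests), because global hyperbolicity does not supply the ``no null branching'' hypothesis; instead, as you yourself go on to observe, one must re-run the proof and notice that the no-branching assumption enters only through Corollary~\ref{Cor:epslimitmax} in step~(1) of Proposition~\ref{prop:cp-c1}, where global hyperbolicity provides a substitute. The paper's own proof sketch makes exactly that replacement, and names it precisely: Corollary~\ref{Cor:epslimitmax} is swapped for the ``limiting result'' \cite[Prop.~2.13]{Graf19}, which (though weaker, producing only \emph{one} limit geodesic rather than recovering a prescribed one) is enough for the compactness argument. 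Your write-up gestures at this (``following exactly as in \cite{Graf19}\ldots via the standard achronal-boundary arguments'') but leaves the specific input unnamed; identifying \cite[Prop.~2.13]{Graf19} as the drop-in replacement is the one concrete thing missing. The reflectivity points and the lifting of global hyperbolicity to coverings are correctly handled. One cautionary note on the piercing: a Cauchy temporal function yields a timelike field whose flow lines each meet a \emph{Cauchy} surface once, not an arbitrary partial Cauchy surface $\Sigma$, so ``a short construction then yields\ldots a piercing of $\Sigma$'' is not justified as stated unless $\Sigma$ is in fact Cauchy; the paper glosses this the same way (``implied by the existence of a Cauchy surface''), so this is an inherited imprecision rather than a new one, but be aware of it.
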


A simpler formulation is obtained assuming that $S$ is simply connected: the theorems then state that the entire spacetime is simply connected provided it is null complete. Originally the theorem of Gannon was given in contrapositive form, saying that if $S$ is (topologically) a sphere and $\Sigma$ is not simply connected, then $M$ has to be null incomplete.
\medskip

In proving our results we will follow the general layout of \cite{Sil10}. The proof consists of a causal and an analytic part as well as a purely topological part. At the heart of the causal part lies Proposition 4.1 of \cite{Sil10}, which we will prove for past reflecting $C^1$-spacetimes without null branching in Proposition \ref{prop:cp-c1}. It essentially states that the inside region $\Sigma_-$ of an asymptotically regular hypersurface $\Sigma$ is relatively compact. We will first establish the needed causality properties for $C^1$-spacetimes.

\section{Maximizers and Causality in $C^1$}\label{sec:c1}

In this section we establish properties of geodesics and results on causality in $C^1$-spacetimes. Building on recent results of \cite{GraLin18} and \cite{Graf19} we will establish that causal maximizers in $C^1$-spacetimes are geodesics. Note that this result was also independently discovered very recently in \cite{LLS20}. 

First note that by \cite[Thm.\ 1.1]{GraLin18} maximizers have a causal character, even in Lipschitz spacetimes.

We start by showing that also in a $C^1$-spacetime broken causal geodesics are not maximizing. As a prerequisite we use a variational argument similar to the one in \cite{ON83}, 10.45-46, which still holds true in our setting.

\begin{Lemma}\label{l3}
Let $c:[0,1]\to M$ be a causal pieceweise $C^2$-curve in a $C^1$-spacetime $M$ and let $X$ be a piecewise $C^1$-vector field along $c$. Then there is a piecewise $C^2$-variation $c_s$ of $c$ with variation vector field $X$. Moreover, if $g(X',c')<0$ along $c$ then for any variation $c_s$ of $c$ with variation vector field $X$ and small enough $s$, the longitudinal curve $c_s$ is timelike and longer than $c$.
\end{Lemma}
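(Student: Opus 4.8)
The statement has two parts: existence of a variation with prescribed variation vector field, and the timelike/longer conclusion under the sign condition $g(X',c')<0$. For the first part, I would construct the variation explicitly by flowing along $X$. Concretely, cover $[0,1]$ by finitely many subintervals on which $c$ is $C^2$ and which admit coordinate charts; on each piece set $c_s(t):=\exp_{c(t)}\!\big(s\,X(t)\big)$ using the smooth background exponential of the Riemannian metric $h$ (not the Lorentzian one, which is only $C^0$ on the level of its exponential map and need not even be defined), or more simply $c_s(t) := c(t) + sX(t)$ read in a fixed chart, suitably patched. Since $X$ is piecewise $C^1$ and $c$ piecewise $C^2$, the resulting $c_s$ is piecewise $C^2$ jointly in $(s,t)$, and $\partial_s\big|_{s=0} c_s = X$ by construction. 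One must check that for $|s|$ small $c_s$ still makes sense (stays in the chart domains), which is immediate by compactness of $[0,1]$ and continuity.

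For the second part, I would compute the $h$-arclength — or better, work directly with the Lorentzian line element — of $c_s$. Write $L(s) := \int_0^1 \sqrt{-g(\dot c_s,\dot c_s)}\,dt$ (on the part where the integrand is defined; one handles the breakpoints by summing over pieces). The key computation is $\frac{d}{ds}\big|_{s=0} g(\dot c_s,\dot c_s)$. Using that the variation field is $X$ and that $\partial_s \dot c_s\big|_{s=0} = \dot X =: X'$ (derivative along $c$ using the $C^1$ Levi-Civita connection — here one uses $g\in C^1$ so that $\nabla$ and covariant differentiation along $c$ make sense, exactly as in \cite{ON83}, 10.45--46, which the lemma statement already grants us carries over), one gets the standard first-variation-type identity $\partial_s\big|_{s=0} g(\dot c_s,\dot c_s) = 2\,g(X',\dot c)$ pointwise (modulo the symmetry $g(\nabla_{\partial_s}\dot c_s, \dot c_s) = g(\nabla_{\partial_t}\partial_s c_s,\dot c_s)$, which is the torsion-free property and holds for $C^1$ metrics). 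Since $c$ is causal, $g(\dot c,\dot c)\le 0$, and by hypothesis $g(X',\dot c)<0$, so $\partial_s\big|_{s=0} g(\dot c_s,\dot c_s) < 0$ wherever $\dot c$ exists; hence for small $s>0$ we get $g(\dot c_s,\dot c_s)<0$ along $c_s$, i.e. $c_s$ is timelike. One must be slightly careful because $c$ might be null, so $g(\dot c,\dot c)=0$ and $\sqrt{-g(\dot c_s,\dot c_s)}$ is not differentiable in $s$ at $s=0$; the cleanest route is to argue with $g(\dot c_s,\dot c_s)$ itself rather than its square root, establishing strict timelikeness first, and only then compare lengths.

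For the length comparison, once $c_s$ is known timelike for small $s>0$, I would again differentiate, but now on a regime where things are smooth: fix a small $s_1>0$, and for $s\in(0,s_1]$ the curves $c_s$ are timelike and the integrand $\sqrt{-g(\dot c_s,\dot c_s)}$ is smooth in $s$ on each piece, so $L$ is differentiable there with $L'(s) = \int \frac{-g(\nabla_{\partial_s}\dot c_s,\dot c_s)}{\sqrt{-g(\dot c_s,\dot c_s)}}\,dt$. Taking $s\downarrow 0$ and using $\partial_s g(\dot c_s,\dot c_s)\to 2g(X',\dot c)<0$ while the denominator $\to 0^+$ forces $L'(s)\to +\infty$, so $L$ is strictly increasing near $0$ and $L(s)>L(0) = L(c)$ for small $s>0$. (If $c$ is already timelike, $L(0)>0$ and the elementary first-variation formula gives $L'(0) = \int \frac{-g(X',\dot c)}{\sqrt{-g(\dot c,\dot c)}} > 0$ directly, which is even easier.)

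\textbf{Main obstacle.} The genuine technical point — and the one I'd expect to require the most care — is the low regularity: $g$ is only $C^1$, so the Christoffel symbols are merely $C^0$, the covariant derivative along a curve is only as regular as one derivative below the curve's regularity, and the Lorentzian exponential map is \emph{not} available. This is why the variation must be built with the smooth auxiliary metric $h$ (or in coordinates) and why one invokes the paper's standing assertion that the O'Neill-style variational formulae of \cite{ON83} 10.45--46 persist in $C^1$. Verifying that $c_s$ is piecewise $C^2$ and that differentiation under the integral sign is legitimate (dominated convergence, uniform bounds on compacta via the complete background metric $h$ and Lemma \ref{lem:2.2}-type control) is where the real work sits; the rest is the classical first/second variation bookkeeping, handled piece by piece across the finitely many breakpoints of $c$ and $X$.
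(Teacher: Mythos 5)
Your proposal matches the paper's argument in the two essential steps: building the variation by pushing $c$ along $X$ (the paper extends $X$ to a local vector field $\tilde X$ and sets $c_s(t)=\mathrm{Fl}^{\tilde X}_s(c(t))$ instead of using the $h$-exponential or chart translation, but this is cosmetic), and computing $\partial_s\big|_0\, g(c_s',c_s') = 2g(X',c') < 0$ via metric compatibility and torsion-freeness, both of which survive at $C^1$ regularity. You are also right to flag that $\partial_s\dot c_s|_0=\dot X$ is just commutation of the ordinary second partials of the map $(s,t)\mapsto c_s(t)$; the torsion-free property only enters when you trade $\nabla_{\partial_s}\partial_t$ for $\nabla_{\partial_t}\partial_s$.

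Where you diverge is the length comparison, and there you take a genuine but unrewarded detour. Once you have established that $g(c_s'(t),c_s'(t))<g(c'(t),c'(t))\le 0$ for all $t$ and small $s$ (uniformly in $t$ by compactness of $[0,1]$), there is nothing left to do: the integrand $\sqrt{-g(c_s',c_s')}$ strictly exceeds $\sqrt{-g(c',c')}$ pointwise, and $L(c_s)>L(c)$ follows by monotonicity of the integral. The paper does exactly this. Your route through $L'(s)$ on $(0,s_1]$, with the separate discussion of $L'(s)\to+\infty$ in the null case and a direct first-variation formula in the timelike case, is correct but introduces precisely the non-differentiability issue at $s=0$ that you then have to argue around; the pointwise inequality bypasses it entirely and treats the null and timelike cases uniformly. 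I would replace that paragraph by the one-line integrand comparison.
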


\begin{proof}
First we expand $X$ to a vector field $\tilde X$ in a neighbourhood of $c$ and set $c_s(t):= \text{Fl}^{\tilde X}_s(c(t))$, which is a variation of $c$ with variation vector field $X$.	
For the second part note that $g(c_0'(t),c_0'(t))\leq 0$ for all $t$ (except possible break points) as $c$ is causal and further 
\begin{equation*}
\frac{\partial}{\partial s}|_0 g(c_s'(t),c_s'(t)) = 2 g(\frac{\partial}{\partial s}|_0 \frac{\partial}{\partial t} c_s(t),c_0'(t)) = 2 g(\frac{\partial}{\partial t} X(t),c'(t)) <0
\end{equation*}
by assumption. So for small $s$ we have $g(c_s'(t), c_s'(t))< g(c'(t),c'(t))\leq 0$ (for almost all $t$) and hence $L(c_s)= \int_0^1 (-g(c_s'(t), c_s'(t)))^{\frac{1}{2}}\, dt > \int_0^1 (-g(c_0'(t), c_0'(t)))^{\frac{1}{2}}\, dt =L(c)$.
\end{proof}

\begin{Lemma}\label{l4}
In a $C^1$-spacetime no broken causal geodesic is maximizing.
\end{Lemma}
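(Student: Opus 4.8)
The plan is to reduce the statement to a purely local question at the break point and then apply Lemma \ref{l3}. Suppose $c:[0,1]\to M$ is a causal geodesic broken at a single interior parameter $t_0\in(0,1)$, so that the left velocity $v_-:=c'(t_0^-)$ and the right velocity $v_+:=c'(t_0^+)$ are distinct (as elements of $T_{c(t_0)}M$, after possibly reparametrizing each smooth piece to have matching speed, which one may do since both pieces are geodesics and a broken curve being maximizing is independent of the parametrization). It suffices to produce a piecewise $C^1$ variation vector field $X$ along $c$, vanishing at the endpoints, with $g(X',c')<0$ almost everywhere (in particular on each smooth subinterval), since then Lemma \ref{l3} yields a strictly longer timelike longitudinal curve $c_s$ with the same endpoints, contradicting maximality of $c$.

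The heart of the matter is the construction of such an $X$, and this is where the causal (as opposed to strictly timelike) character of $c$ needs care. First I would build $X$ by a standard first-variation-of-arclength computation on each open subinterval. On $[0,t_0]$ and $[t_0,1]$ the curve $c$ is a $C^2$ geodesic, so for a $C^1$ field $Y$ along it one has $\frac{d}{dt} g(Y,c') = g(Y',c') + g(Y, D_t c') = g(Y',c')$ using $D_t c'=0$; thus on each piece the ``bending'' term is controlled purely by $g(Y',c')$. The idea, exactly as in the smooth proof (O'Neill 10.45--46), is to pick $X$ supported near $t_0$, equal at $c(t_0)$ to a vector $w$ chosen so that $g(w, v_- - v_+)>0$ --- such $w$ exists because $v_-\neq v_+$ --- and then take $X(t)=f(t)\,P(t)$ where $P$ is a parallel field along each piece with $P(t_0)=w$ and $f$ is a piecewise-linear bump, increasing on $[t_0-\eta,t_0]$ and decreasing on $[t_0,t_0+\eta]$, zero outside. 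Then $X'=f'P$ on each piece, so $g(X',c') = f'(t)\, g(P(t),c'(t))$; near $t_0$ this is, up to higher-order terms, $f'(t)\,g(w,v_\mp)$, and by choice of the sign of $f'$ on the left versus right pieces together with $g(w,v_- - v_+)>0$ one arranges $g(X',c')<0$ on a neighbourhood of $t_0$ on both sides. Away from $t_0$ one simply has $f'\equiv 0$, so $g(X',c')=0$ there, which is the borderline case; to get the strict inequality \emph{everywhere} needed by Lemma \ref{l3} I would instead add a small globally ``shrinking'' correction --- e.g.\ replace $X$ by $X + \mu\,Z$ where $Z$ is a fixed $C^1$ field along $c$, vanishing at the endpoints, with $g(Z',c')<0$ throughout, which exists because $c$ is causal and one can push a causal curve strictly inside the cone (this is precisely the kind of ``spacelike correction'' that works in $C^1$); for $\mu$ small the sign near $t_0$ is unchanged while the inequality becomes strict off $t_0$.

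The main obstacle I anticipate is ensuring that the requisite auxiliary field $Z$ with $g(Z',c')<0$ along all of a \emph{causal} (possibly null) geodesic actually exists in the $C^1$ setting, since for a null geodesic $g(c',c')=0$ and the naive choice $Z=c'$ gives $g(Z',c')=0$. One clean way around this: fix the complete Riemannian background metric $h$ and the smooth timelike vector field $T$ defining the time orientation; along $c$, let $Z(t)=\phi(t)\,W(t)$ where $W$ is a $C^1$ field with $g(W,c')<0$ everywhere (for instance a suitable multiple of the projection of $-T$, which is everywhere past-or-future related to $c'$ so that $g(W,c')<0$ strictly along a causal curve) and $\phi$ is a fixed smooth bump vanishing only at $0$ and $1$; then $g(Z',c')=\phi' g(W,c') + \phi\, g(W',c')$, and one chooses $W$ and then rescales $\phi$ so the first term dominates and is negative. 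Since $c$ is piecewise $C^2$ and $g\in C^1$, all these fields and their covariant derivatives are at least piecewise $C^0$, so Lemma \ref{l3} applies verbatim. Finally, a broken geodesic with several break points is handled by applying the above at one break point (the variation can be localized near a single $t_0$), so the single-break case suffices. This yields a timelike curve joining $c(0)$ to $c(1)$ strictly longer than $c$, contradicting that $c$ is maximizing.
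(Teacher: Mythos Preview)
Your approach is the paper's: build $X = fP$ with $P$ parallel and $f$ a scalar tent, then invoke Lemma~\ref{l3}. But the execution has two gaps.

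First, the condition $g(w, v_- - v_+) > 0$ is not enough. Since $P$ and $c'$ are both parallel on each piece, $g(X', c') = f'(t)\, g(w, v_\mp)$ \emph{exactly} (no higher-order terms); to get a definite sign on both sides you need $g(w, v_-)$ and $g(w, v_+)$ to have \emph{opposite} signs individually, not merely a nonzero difference. For instance $w = v_-$ satisfies your condition in the unit-timelike case (by reverse Cauchy--Schwarz) yet yields $g(w, v_\pm)$ both negative. The paper takes $w = v_- - v_+$ explicitly and verifies, using $\la v_-, v_+\ra < 0$ together with (in the timelike case) reverse Cauchy--Schwarz, that $\la w, v_-\ra$ and $\la w, v_+\ra$ have opposite signs; this verification is exactly where the causal character of both pieces is used.

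Second, your correction field $Z$ cannot work as described. With $Z = \phi W$, $\phi \geq 0$ vanishing only at the endpoints and $g(W, c') < 0$, the term $\phi'\, g(W, c')$ is negative near $t=0$ (where $\phi' > 0$) but \emph{positive} near $t=1$ (where $\phi' < 0$); no rescaling of $\phi$ fixes this. More structurally: on each unbroken geodesic piece $g(Z', c') = \tfrac{d}{dt}\, g(Z, c')$, so on $[t_0+\eta, 1]$ one has $\int_{t_0+\eta}^1 g(Z', c')\,dt = -g(Z(t_0+\eta), v_+) = -\phi(t_0+\eta)\, g(W, v_+) > 0$, forcing $g(Z', c') > 0$ somewhere there. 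The paper avoids the whole issue by taking $f$ to be a \emph{global} tent (zero at both endpoints, linear up to the break and linear back down), so that $f'$ never vanishes and $g(X', c')$ has a definite sign throughout --- no borderline region, no correction needed.
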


\begin{proof}
Let $\gamma:[0,2] \to M$ be a broken causal geodesic with a break point at $\gamma(1)$. Hence $v:=\lim_{t \uparrow 1} \gamma'(t)$ and $w:=\lim_{t \downarrow 1} \gamma'(t)$ are linearly independent. Also since both vectors are either future or past pointing, we have $\la v,w \ra <0$.

Let us show that the parameteriziation of $\gamma$ can be chosen such that:
\begin{equation}\label{eq:vw}
 \la v,v\ra - \la v, w\ra >0 \quad \text{as well as}\quad
 \la v,w \ra - \la w,w \ra <0.
                \end{equation}
If both $v$ and $w$ are null, this is clear by $ \la v,w \ra <0$. If both vectors are timelike we can w.l.o.g.\ assume them to be unit vectors. By the reverse Cauchy-Schwarz inequality we then have 
$| \la v,w \ra | > 1$ (recall that $v$ and $w$ are not colinear) and hence $\la v,w \ra < -1$. So $\la v,v \ra - \la v,w \ra   > 0$, and in the same way it follows that $\la v,w \ra - \la w,w\ra <0$.
		
		We now show that there exists a timelike curve from $\gamma(0)$ to $\gamma(2)$ longer than $\gamma$. First note that by continuity of the the Christoffel symbols we can solve the linear equations for parallel transport along $\gamma$ (uniquely) and the solution is a $C^1$-vector field.

		We set $y= v-w$ and let $Y_1$ and $Y_2$ be the vector fields along $\gamma|_{[0,1]}$ and $\gamma|_{[1,2]}$, obtained by parallel transport of $y$ along $\gamma|_{[0,1]}$ and $\gamma|_{[1,2]}$, respectively. Next we define a piecewise $C^1$-vector field $Y$ along $\gamma$ by setting $Y|_{[0,1]}=Y_1$ and $Y|_{[1,2]}=Y_2$.
		       
		Since $\gamma$ is a geodesic and $Y$ is parallel we have on $[0,1]$ using
		\eqref{eq:vw}
		\begin{equation}\label{eq:vw1}
		\la \gamma'(t), Y(t) \ra = \la v,v-w \ra = \la v,v\ra -\la v,w\ra >0 
		\end{equation}
		and similarly on $[1,2]$ we have $\la \gamma'(t),Y \ra <0$. 
		
		Now let $f:[0,2]\to [0,\infty)$ be a continuous, piecewise linear function such that $f(0)=0=f(2)$, $f'|_{[0,1)}=1$, and $f'|_{(1,2]}=-1$ and set $X(t):= f(t) Y(t)$. Then $X$ is a piecewise $C^1$-vector field along $\gamma$. Now consider the variation $\gamma_s$ of $\gamma$ with variation vector field $X$. By \eqref{eq:vw1} we have $\la \gamma',X' \ra <0$ and hence by Lemma \ref{l3} there exists some small $s$, such that $\gamma_s$ is longer than $\gamma$. By the choice of $f$ the endpoints agree and we have shown the statement.
	\end{proof}

\begin{Theorem}\label{thm:maxgeod}
Let $(M,g)$ be a $C^1$-spacetime, then any maximizing causal curve is a 
causal geodesic and hence a $C^2$-curve.
\end{Theorem}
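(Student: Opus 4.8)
The plan is to combine the local maximization property with the two lemmas just established. Let $\gamma:[a,b]\to M$ be a maximizing causal curve. First I would reduce to a local statement: maximality is inherited by subsegments, so it suffices to work in a single chart (or a small coordinate ball) and show that on each such piece $\gamma$ is, after reparametrization, a $C^2$ solution of the geodesic equation. By \cite[Thm.\ 1.1]{GraLin18} the curve has a fixed causal character, so it is either everywhere timelike or everywhere null; in either case, after reparametrizing by $h$-arclength (which does not affect being a maximizer, only the parametrization), $\gamma$ is in particular a locally Lipschitz causal curve realizing the Lorentzian distance between its endpoints.

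The core of the argument is to show that such a local maximizer is smooth enough to apply Lemma \ref{l4}. Here I would invoke (or reprove) the now-standard limit-curve/approximation machinery from \cite{Graf19}: using the narrower-cone regularizations $\check g_\epsilon \prec g$ from Lemma \ref{lem:2.2}, a $g$-causal maximizer can be approximated by $\check g_\epsilon$-causal curves, and one shows that the $g$-distance between nearby points along $\gamma$ agrees to sufficient order with the smooth $\check g_\epsilon$-distance, whose minimizers are genuine smooth geodesics. The upshot — which is exactly the $C^1$-analogue of the classical fact that a locally maximizing causal curve in a smooth spacetime is a geodesic — is that $\gamma$ is locally a $C^2$ curve solving the geodesic equation for $g$; alternatively one can argue directly that a maximizer cannot have a corner, as a corner would let one shortcut via Lemma \ref{l3} applied to a suitable variation. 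Either route gives: $\gamma$ is locally an (unbroken) $C^2$ geodesic.

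Finally I would globalize. Cover $[a,b]$ by finitely many overlapping subintervals on each of which $\gamma$ is a $C^2$ geodesic. On overlaps the two geodesic pieces share an open arc, hence the same initial data at an interior point, and — although the geodesic equation is not uniquely solvable in $C^1$ — two $C^2$ geodesics that coincide on a nondegenerate arc are literally the same curve there, so the pieces patch to a single $C^2$ curve on $[a,b]$ which is a geodesic on the interior of each piece. If there were any remaining junction where the one-sided velocities differed, $\gamma$ would be a broken causal geodesic, contradicting Lemma \ref{l4}; hence $\gamma$ is an unbroken causal geodesic on all of $[a,b]$, and being a $C^2$ geodesic it is a $C^2$ curve.

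The main obstacle I expect is the local smoothness step: establishing that a $g$-maximizer is locally $C^2$ without the convex neighbourhoods and exponential map that are unavailable below $C^{1,1}$. This is where the quantitative regularization estimate \eqref{eq:neweps} and the comparison between $g$- and $\check g_\epsilon$-lengths must do the work — one needs that the error between the $g$-distance and the smooth $\check g_\epsilon$-distance along $\gamma$ is small enough (uniformly, as $\epsilon\to 0$) that the limit of the smooth $\check g_\epsilon$-geodesic segments is forced to be a $C^2$ $g$-geodesic rather than merely a Lipschitz causal curve. Once past that point, Lemmas \ref{l3} and \ref{l4} handle the rest cleanly.
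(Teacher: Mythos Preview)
Your proposal has a genuine gap at exactly the point you flag as the main obstacle. The regularization route you sketch can at best produce \emph{some} maximizing $g$-geodesic between the endpoints of $\gamma$ (this is essentially what \cite[Prop.\ 2.13]{Graf19} gives in a globally hyperbolic neighbourhood): one takes $\check g_\epsilon$-maximizing geodesics between $\gamma(a)$ and $\gamma(b)$ and passes to a limit. But in $C^1$ the geodesic equation is not uniquely solvable, so there is no reason the limit geodesic coincides with $\gamma$. Your alternative ``corner'' argument via Lemma \ref{l3} presupposes that $\gamma$ is already piecewise $C^2$, which is precisely what is at stake. The globalization step you describe has the same circularity: patching $C^2$ pieces on overlaps only makes sense once you know the pieces are segments of $\gamma$ itself, not merely geodesics sharing its endpoints.

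The paper closes this gap not by regularizing the metric but by a dyadic bisection argument carried out for the fixed metric $g$. Working in a globally hyperbolic neighbourhood, one uses the existence result of \cite{Graf19} to get maximizing geodesics from $\gamma(0)$ to $\gamma(\tfrac12)$ and from $\gamma(\tfrac12)$ to $\gamma(1)$; their concatenation is maximizing (same length as $\gamma$), hence by Lemma \ref{l4} an unbroken geodesic $\gamma_1$. Iterating produces geodesics $\gamma_n$ that agree with $\gamma$ at all dyadic parameters $k/2^n$, which forces $\gamma_n\to\gamma$ uniformly. An ODE limit theorem (Hartman) then yields a subsequence converging in $C^1$ to a geodesic $\sigma$; since the uniform limit is $\gamma$, one gets $\gamma=\sigma$. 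The point is that the dyadic constraint pins the approximating geodesics to $\gamma$ at a dense set of times, which is what your approximation-by-$\check g_\epsilon$ scheme lacks.
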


\begin{proof}
Observe that being a geodesic is a local property and that any part of a maximizing curve is maximizing. Moreover, since any point in a $C^1$-spacetime has a globally hyperbolic neighbourhood\footnote{Actually there exists a smooth metric with wider lightcones, which has a neighbourhood base of globally hyperbolic sets, but these are also globally hyperbolic for $g$.}, we can assume $M$ to be globally hyperbolic. 

Let $\gamma:[0,1]\to M$ be a maximizer and set $p=\gamma(0)$, $q=\gamma(1)$.	
By \cite{Graf19}, Proposition 2.13, there exists a maximizing causal geodesic from $p$ to $q$ of the same causal character as $\gamma$. Also there
exist maximizing, causal geodesics $\sigma_1$ from $p$ to $\gamma(\frac{1}{2})$ and $\sigma_2$ from $\gamma(\frac{1}{2})$ to $q$. Note that since $\sigma_i$ are maximizing, we have $L(\sigma_1)=L(\gamma|_{[0,\frac{1}{2}]})$ and $L(\sigma_2)=L(\gamma|_{[\frac{1}{2},1]})$. This means $L (\sigma_1 \circ \sigma_2)=L(\gamma)$. So the curve $\gamma_1:=\sigma_1 \circ \sigma_2$ is maximizing and hence by Lemma \ref{l4} it is an unbroken geodesic.

This procedure can be iterated to obtain a sequence of maximizing causal geodesics $\gamma_n$ from $p$ to $q$, which meet $\gamma$ at all parameter values $\frac{k}{2^n}$, for $\mathbb{N}\ni k \leq 2^n$. Observe that $\gamma_n$ converge to $\gamma$ uniformly: First, for any $\epsilon>0$  we cover $\gamma$ by finitely many open, causally convex, sets $V^\epsilon_{p_i}$ around $p_i\in\gamma$ of $h$-diameter at most $\epsilon$. The union $V^\epsilon=\sup_i V^\epsilon_{p_i}$ is a neighbourhood of $\gamma$, and there exist dyadic numbers $t_m$, $m=0,\ldots k$, $t_0=0$ and $t_k=1$, such that $\gamma(t_m)$ and $\gamma(t_{m+1})$ lie in a single $V^\epsilon_{p_i}$ for some $i$. Moreover, there exists some $N(\epsilon)$, such that for all $n \geq N$ all curves $\gamma_n$ meet every $\gamma(t_m)$ and hence the segments of $\gamma_n$ from $t_m$ to $t_{m+1}$ are contained in $V^\epsilon_{p_i}$. So we conclude that $d_h(\gamma(t),\gamma_n(t))\leq \epsilon$, so $\gamma_n \to \gamma$ uniformly.

We can now parameterize $\gamma_n$, such that $\|\gamma_n'(0)\|_h=1$ and hence pass to a subsequence (again denoted by $\gamma_n$) such that $\gamma_n'(0) \to v$.
By \cite[Chap.\ II Thm.\ 3.2,]{Hart02}, there exists a subsequence of $\gamma_n$ which converges uniformly on compact sets to a geodesic $\sigma$ with initial values $\sigma(0)=p$ and $\sigma'(0)=v$. But as $\gamma_n$ converges to $\gamma$, so must any subsequence and hence $\gamma=\sigma$ on the entire domain of $\gamma$. Finally $\sigma$ reaches $q$ since otherwise, $\sigma$ would agree with $\gamma$ on its entire maximal domain of definition, would be future inextendible and contained in the compact set $\gamma([0,1])$, a contradiction to non-imprisonment which holds in any globally hyperbolic $C^1$-spacetime. 
\end{proof}

Assuming non-branching we are able to prove results on limits of maximizers needed in the following.

\begin{Proposition}\label{Prop:nonbranch}
Let $M$ be a $C^1$-spacetime without null branching. 
If two causal geodesic segments contained in an achronal set intersect, they are segments of the same geodesic or they intersect at the endpoints.
\end{Proposition}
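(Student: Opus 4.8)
The plan is to prove the contrapositive-flavored statement directly by analyzing the two intersecting geodesic segments locally around a point of intersection. Let $\gamma_1,\gamma_2\colon[0,1]\to M$ be the two causal geodesic segments, both contained in an achronal set $A$, and suppose they intersect at a point $p = \gamma_1(s_0) = \gamma_2(t_0)$ but are not segments of a single geodesic and do not merely meet at their endpoints. I would first reduce to the case that $p$ is an interior point of at least one of the two segments, say $\gamma_1$. The key observation is that since $A$ is achronal and contains both segments, each $\gamma_i$ is a maximizing causal curve: any two of its points are connected by a causal curve (the segment itself) but cannot be connected by a timelike curve (achronality of $A$), so the Lorentzian distance between them is $0$, and a causal geodesic realizing distance $0$ between its endpoints is maximizing. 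Hence both $\gamma_i$ are maximizing null curves (a maximizing causal curve between points at distance $0$ must be null), which puts us squarely in the setting of Definition \ref{def:branch}.

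Next I would set up the branching configuration. Reparametrize so that $p = \gamma_1(a) = \gamma_2(a)$ for a common parameter value $a\in(0,1)$, with the segments agreeing up to and including $a$ being the goal we want to contradict. The subtlety is that a priori $\gamma_1$ and $\gamma_2$ need not agree on $[0,a]$ — they just pass through $p$. To handle this, I would concatenate: consider the curve that runs along $\gamma_1$ up to $p$ and then continues along $\gamma_2$ after $p$ (suitably reparametrized), and compare it with $\gamma_2$ itself (which runs along $\gamma_2$ before and after $p$). Both concatenations are causal curves contained in $A$, hence maximizing null curves. If $\gamma_1'(a)$ and $\gamma_2'(a)$ are linearly independent, then the concatenation "$\gamma_1$ then $\gamma_2$" is a broken causal geodesic, and by Lemma \ref{l4} a broken causal geodesic is not maximizing — contradiction. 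Therefore $\gamma_1'(a) \parallel \gamma_2'(a)$, i.e. the two geodesics have proportional (and, after rescaling, equal) velocity at $p$. The same argument applied to the backward concatenation forces the backward velocities to agree as well.

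Now comes the step where the no-null-branching hypothesis does the real work. Having shown the velocities agree at $p$, I want to conclude the two segments coincide on a whole subinterval. Restrict to a small parameter interval $[a-\eta, a+\eta]$ and consider the two maximizing null curves $\gamma_1|_{[a-\eta,a+\eta]}$ and $\gamma_2|_{[a-\eta,a+\eta]}$, which share the point $p$ and the direction at $p$. By Theorem \ref{thm:maxgeod} these are $C^2$ geodesics. If they were distinct, then by tracing backward from the \emph{last} parameter at which they still coincide (which exists and is nondegenerate once one shows they must coincide on some initial stretch — this is where one invokes that through $p$ with a fixed direction any further agreement must be arranged), one produces exactly a branching point in the sense of Definition \ref{def:branch}, contradicting no null branching. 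A cleaner way: let $b = \sup\{\,t \le a : \gamma_1 = \gamma_2 \text{ on } [t,a]\,\}$ going one direction and the analogous $\sup$ the other direction; the no-branching hypothesis, together with the fact that at every point of coincidence the velocities must agree (by the broken-geodesic argument above, reapplied), shows the coincidence set is open and closed in the relevant interval, hence the segments agree on their entire common parameter range — so they are segments of the same geodesic, contradicting our assumption. This propagation-of-coincidence argument is the main obstacle: one must be careful that at each putative "last point of agreement" the configuration genuinely matches Definition \ref{def:branch} (common initial segment, then splitting), and that achronality is preserved under the concatenations so that Lemma \ref{l4} keeps applying.

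I expect the trickiest technical point to be bookkeeping the reparametrizations and endpoints so that "intersect at the endpoints" is correctly excluded and so that the concatenated curves are genuinely broken geodesics (not accidentally smooth) whenever the velocities fail to match; once that is set up, Lemma \ref{l4} plus Definition \ref{def:branch} close the argument.
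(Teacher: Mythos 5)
Your proof takes essentially the same route as the paper: reduce to an interior intersection, split into the two cases of non-proportional versus proportional tangents at the meeting point, rule out the first via Lemma~\ref{l4} (a broken concatenation would fail to maximize, contradicting achronality), and invoke the no-null-branching hypothesis for the second. The paper's version is considerably terser, simply asserting that proportional tangents already force branching, and does not carry out the open-and-closed coincidence-set propagation you flag as the delicate step; the substance of the argument is the same.
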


\begin{proof}
	Suppose such a segment intersetcs a second one in the interior of its domain. Then either their tangents at the meeting point are not proportional and so by Lemma \ref{l4} their concatination, which is a broken causal geodesic, stops maximising, contradicting the fact that both segments are contained in an achronal set. Or otherwise their tangents at the meeting point are proportional and hence null branching would occur, again a contradiction.
\end{proof}

Using Theorem \ref{thm:maxgeod} we may also give a slightly different formulation: 
If two different initially maximizing null curves starting at the same point meet again, they stop maximizing.
\medskip

The final result of this section will be essential in the proof of the main theorem and it is the main point at which we use the assumption that null branching doesn't occur. Again, $\check g_\epsilon$ is as in Lemma \ref{lem:2.2}.

\begin{Corollary}\label{Cor:epslimitmax}
Let $(M,g)$ be a $C^1$-spacetime without null branching.
Let $S$ be an enclosing surface in a partial Cauchy surface $\Sigma$ and let $\gamma:[0,1]\to E_g^+(S)$ be a $g$-null curve. Then for any $1>\delta>0$, $\gamma|_{[0,1-\delta]}$ is a limit of $\check{g}_{\epsilon_n}$-null curves contained in $E_{\check{g}_{\epsilon_n}}^+(S)$ for an appropriate subsequence of $\check{g}_{\epsilon_n}$.
\end{Corollary}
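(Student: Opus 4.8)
The plan is to realise $\gamma|_{[0,1-\delta]}$ as a uniform limit of $\check g_{\epsilon_n}$-null generators of the approximate horismoi $E^+_{\check g_{\epsilon_n}}(S)$ that run from $S$ to endpoints converging to the \emph{interior} point $q:=\gamma(1-\tfrac\delta2)$, and then to use the no-null-branching hypothesis to force the limit curve back onto $\gamma$ itself. First I would record that $\gamma$ is a null geodesic: since push-up holds in $C^1$-spacetimes, $E^+_g(S)$ is achronal --- if $\gamma(s)\ll\gamma(t)$ for $s<t$ then $\gamma(t)\in I^+_g(J^+_g(S))=I^+_g(S)$, contradicting $\gamma(t)\in E^+_g(S)$ --- so no subsegment of $\gamma$ can be lengthened to a timelike curve, $\gamma$ is maximising, and Theorem~\ref{thm:maxgeod} makes it a $C^2$ null geodesic. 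Prepending the null generator of $E^+_g(S)$ joining $S$ to $\gamma(0)$ (it exists because any causal curve from $S$ to $\gamma(0)\in J^+_g(S)\setminus I^+_g(S)$ is achronal, and is unbroken at $\gamma(0)$ by Lemma~\ref{l4}) and reparametrising, I may assume $\gamma(0)\in S$; as $\Sigma$ is acausal, $\gamma$ meets $\Sigma$ only at $\gamma(0)$.

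Next I would produce the endpoints and the approximating curves. Since $\check g_{\epsilon_n}\prec g$, every non-constant $\check g_{\epsilon_n}$-causal curve is $g$-timelike, so $J^+_{\check g_{\epsilon_n}}(S)\setminus S\subseteq I^+_g(S)$ and in particular $q\notin J^+_{\check g_{\epsilon_n}}(S)$; on the other hand a fixed $g$-timelike curve $c$ with $c(0)=q$ and $c(\tau)\in I^+_g(q)\subseteq I^+_g(S)$ for $\tau>0$ has a uniform timelike margin on its compact image, hence is $\check g_{\epsilon_n}$-timelike there for $n$ large, so $c(\tau)\in I^+_{\check g_{\epsilon_n}}(S)$ for $n$ large and each fixed $\tau>0$. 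Taking $q_n$ to be the earliest point of $c$ on the boundary of $J^+_{\check g_{\epsilon_n}}(S)$ I obtain, along a subsequence, points $q_n\to q$ with $q_n\in E^+_{\check g_{\epsilon_n}}(S)$ (using that $\check g_{\epsilon_n}$ is smooth and $S$ compact). Through each such $q_n$ runs a past null generator $\gamma_n:[0,1]\to E^+_{\check g_{\epsilon_n}}(S)$ issuing orthogonally from $S$, a $\check g_{\epsilon_n}$-null geodesic with $\gamma_n(0)\in S$, $\gamma_n(1)=q_n$. These are $g$-causal (as $\check g_{\epsilon_n}\prec g$), and a limit curve argument combined with the continuous-dependence theorem \cite[Ch.~II Thm.~3.2]{Hart02} for the geodesic equation (exactly as in the proof of Theorem~\ref{thm:maxgeod}) yields a subsequence converging uniformly to a $g$-causal geodesic $\sigma:[0,1]\to M$ with $\sigma(0)\in S$, $\sigma(1)=q$. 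As $q\notin I^+_g(S)$, push-up forces $\sigma$ to be $g$-null and contained in $E^+_g(S)$, hence a null geodesic by Theorem~\ref{thm:maxgeod}.

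Then I would identify $\sigma$ with $\gamma$. The curves $\sigma$ and $\gamma$ are causal geodesic segments in the achronal set $E^+_g(S)$ meeting at $q$, and $q$ is an \emph{interior} point of $\gamma$ (this is where $1-\tfrac\delta2<1$ enters, so that $\gamma|_{[1-\delta/2,1]}$ is non-degenerate): were their tangents at $q$ non-proportional, the broken geodesic $\sigma$ followed by $\gamma|_{[1-\delta/2,1]}$ would cease to be maximising by Lemma~\ref{l4}, contradicting achronality, and were they proportional with $\sigma\neq\gamma|_{[0,1-\delta/2]}$, reversing the two (now unbroken) geodesics would exhibit null branching. So by Proposition~\ref{Prop:nonbranch} $\sigma$ and $\gamma$ are segments of one and the same geodesic; since both $\sigma$ and $\gamma|_{[0,1-\delta/2]}$ run from a point of $S$ to $q$ and $\Sigma$ is acausal (so this geodesic cannot meet $\Sigma$ twice), they coincide up to affine reparametrisation. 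Restricting and reparametrising the $\gamma_n$ to the parameter interval corresponding to $[0,1-\delta]$ then produces the desired $\check g_{\epsilon_n}$-null curves in $E^+_{\check g_{\epsilon_n}}(S)$ converging to $\gamma|_{[0,1-\delta]}$.

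The step I expect to be the main obstacle is the construction in the second paragraph: producing genuine null generators of the (merely smooth) horismoi $E^+_{\check g_{\epsilon_n}}(S)$ through $q_n$ and --- without global hyperbolicity --- confining the family $\{\gamma_n\}$ to a fixed compact set so that the limit curve theorem applies, which will need a careful localisation near $\gamma$ (using that each $E^+_{\check g_{\epsilon_n}}(S)$ is an achronal boundary ruled by null geodesics and comparing with the fixed $g$-generator $\gamma$). The conceptual heart, by contrast, is the identification step: it is precisely the absence of null branching, through Proposition~\ref{Prop:nonbranch} (itself resting on Lemma~\ref{l4} and Theorem~\ref{thm:maxgeod}), that excludes the \emph{a priori} possibility of the limit $\sigma$ being a \emph{different} generator of the horismos, and thereby pins the limit down to $\gamma$.
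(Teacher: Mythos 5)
Your overall strategy mirrors the paper's: approximate a target point $q$ on $\gamma$ by points $q_n$ on the boundaries of the approximating chronological futures, take the corresponding $\check g_{\epsilon_n}$-null generators, pass to a limit via Hartman's continuous-dependence theorem, and then use Proposition~\ref{Prop:nonbranch} to force the limit geodesic to coincide with $\gamma$. The identification step in your third paragraph is correct and is indeed the conceptual heart.

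However, there is a genuine gap exactly where you flag ``the main obstacle,'' and you do not fill it. You assert that through each $q_n$ there runs a $\check g_{\epsilon_n}$-null generator $\gamma_n$ with $\gamma_n(0)\in S$, issuing orthogonally from $S$. In the absence of global hyperbolicity this is precisely what has to be proved, not assumed: a generator of the achronal boundary $\partial I^+_{\check g_{\epsilon_n}}(S)$ through $q_n$ either has its past endpoint on $S$ \emph{or is past inextendible}, and nothing rules out the second alternative a priori. Relatedly, your parenthetical ``using that $\check g_{\epsilon_n}$ is smooth and $S$ compact'' does not give $q_n\in E^+_{\check g_{\epsilon_n}}(S)$: $\partial J^+_n(S)=\partial I^+_n(S)$ may contain points not in $J^+_n(S)$, and since your auxiliary curve $c$ is $\check g_{\epsilon_n}$-timelike (as $\check g_{\epsilon_n}\prec g$), the set $\{\tau: c(\tau)\in J^+_n(S)\}$ could be an open interval whose left endpoint $q_n$ lies outside $J^+_n(S)$. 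The paper avoids the first problem by working with $\partial I^+_n(S)$ rather than $E^+_n(S)$ and then splitting into the two cases (generator meets $S$, generator past inextendible); the difficult case is eliminated by a separate argument: the Cauchy development $D(\Sigma)$ is a globally hyperbolic, causally convex neighbourhood of $\gamma(0)\in S$ for $g$ and hence for $\check g_{\epsilon_n}$, the limit geodesic $\sigma$ must pass through $D(\Sigma)$ close to $\gamma(0)$, and inside $D(\Sigma)$ one can produce an $\check g_{\epsilon_n}$-null geodesic from $S$ in $E^+_n(S,D(\Sigma))$ which, by Proposition~\ref{Prop:nonbranch}, must coincide with a portion of $\gamma_n$ --- forcing $\gamma_n$ to meet $S$ after all and yielding a contradiction. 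Without some argument of this kind your second paragraph does not establish $\sigma(0)\in S$, and the whole construction may instead converge to a past-inextendible generator that never reaches $S$.
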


\begin{proof}
Let $\gamma$ be a future directed $g$-null $S$-maximizer starting at $p=\gamma(0)\in S$, so $\gamma\subseteq E^+(p)$. For any $1>\delta>0$ there exist points $ q_n^\delta \in \pt I^+_{n}(S):=\pt I^+_{\check g_{\epsilon_n}}(S)$ converging to $\gamma(1-\delta)=:q^\delta$: To see this let $U_k$ be a sequence of connected nested neighbourhoods of $q^\delta$ with $\bigcap_k U_k=\{q^\delta \}$. Choose some $q^e_k \in U_k \bs \overline{J^+(S)} $
and $q^i_k \in U_k \cap I^+(S)$. For large $n$ we can achieve $q^i_k \in I^+_{{n}}(S)$ and since also $q^e_k \in U_k\bs \overline{J^+_{{n}}(S)}$ there exists a curve from $q^i_k$ to $q^e_k$ which starts in $I^+_{{n}}(S)$ and leaves it and hence must meet $\pt I^+_{n}(S)$ in a point which we call $q_n^\delta$.

Hence there are future directed $\check{g}_{\epsilon_n}$-null maximizing geodesics $\gamma_n^\delta$ ending at $q_n^\delta$ and contained in $\pt I^+_{{n}}(S)$. Note that the $\gamma_n^\delta$ either meet $S$ or are past inextendible.

Now by \cite[Chap.\ II, Thm.\ 3.2]{Hart02} there exists a subsequence of $\gamma_n^\delta$, denoted again by $\gamma_n^\delta$, converging to a maximizing $g$-null geodesic $\sigma^\delta$ ending at $q^\delta$ which is entirely contained in $\pt I^+(S)$. By Thm.\ \ref{thm:maxgeod} $\gamma$ is a geodesic and continues to be maximizing after $q^\delta$. Also, by construction $\sigma^\delta$ is non-trivial, so $\gamma$ and $\sigma^\delta$ coincide on some $\gamma|_{[a,1-\delta]}$ (with $0\leq a<1-\delta$) by Proposition \ref{Prop:nonbranch}.

There are two possibilities: Either there exists a subsequence of $\check{g}_{\epsilon_n}$ such that any $\gamma_n^\delta$ meets $S$, in which case, by passing to this subsequence, also $\sigma^\delta$ meets $S$ and hence $\sigma^\delta\supseteq \gamma|_{[0,1-\delta]}$ and we obtain the desired property.

The other possibility is that there is no subsequence, such that all $\gamma_n^\delta$ meet $S$. We show that this is impossible. If this were the case we could choose a subsequence $\gamma_n^\delta$ of past-inextendible curves. Then also $\sigma^\delta$ is past-inextendible and hence must leave $\gamma([0,1-\delta])$. Thus again $\sigma^\delta\supseteq \gamma|_{[0,1-\delta]}$ and it remains to show that $\gamma_n^\delta \subseteq E^+_n(S)$ for large $n$. 

To this end let $U:=D(\Sigma)$, which is a globally hyperbolic, open, and causally convex  neighbourhood of $p=\gamma(0)\in S$ for the metric $g$ \cite[Cor.\ 3.36, Prop.\ 3.43]{Min19}\footnote{The proofs carry over verbatim to $C^1$-metrics.} and hence also for $g_{\varepsilon_n}$\footnote{This follows immediately since we approximate from  the inside. However, global hyperbolicity is stable in the interval topology even for continuous metrics \cite{Sam16} and $C^1$-convergence is stronger.}. Then there are points $p_n \in \gamma_n^\delta$ with $p_n\to p$ and so for large $n$, $p_n \in \pt I^+_n(S)\cap U= \pt I^+_n(S,U)=E^+_n(S,U)$. Hence there is a $g_{\varepsilon_n}$-null geodesic from $S$ to $p_n$ contained in $E^+_n(S,U)$ which by Proposition \ref{Prop:nonbranch} coincides with (a part of) $\gamma_n^\delta$. Hence $\gamma_n^\delta$ must meet $S$, a contradiction.
\end{proof}

Observe that Corollary \ref{Cor:epslimitmax} clearly holds true for $C^{1,1}$-spacetimes. For globally hyperbolic $C^1$-spacetimes (where, in principle, null branching can occur) a similar result was established in \cite[Prop.\ 2.16]{Graf19}, with slightly different assumptions on $S$ and a weaker conclusion, which only guarantees that for any $p\in E^+_g(S)$, there is a geodesic segment which is an appropriate limit of approximating $\check{g}_\epsilon$-maximizers.

\section{Proof of the main result}\label{sec:proof} We will split the proof of Theorem \ref{thm:glc1} into two parts. The first, analytic part will be concerned with showing that the set $\Sigma_-$ is relatively compact. Here we will generalize (the proof of) \cite[Prop.\ 4.1]{Sil10} by proving new focusing statements for null geodesics using the results from section \ref{sec:c1}. 
The second, topological part uses the results from causality theory detailed in section \ref{sec:c1} to generalize (the proof of) \cite[Thm.\ 2.1]{Sil10}. 

To be self-contained we will briefly sketch also those parts of the original (smooth) proofs which do not need major revision.

\subsection{Analytic aspects}

The main analytical ingredient of our proof is a generalization of \cite[Prop.\ 4.1]{Sil10} to $C^1$-spacetimes, which we will give below in Proposition \ref{prop:cp-c1}. 

In order to do so we need a focusing result for null geodesics in smooth spacetimes which violate the null energy condition by a small margin $\delta$, as in Lemma \ref{lem:sec}. To this end we apply a result of \cite{FK19}\footnote{Observe the opposite signature convention used there.}, which itself is a generalization of \cite[Prop.\ 10.43]{ON83}.

\begin{Proposition} (\cite[Prop.\ 2.7]{FK19}) Let $S$ be a spacelike submanifold of co-dimension $2$ 
in a smooth spacetime and let $\gamma$ be a null geodesic joining $p\in S$ to $q\in J^+(S)$. If there exists a smooth function $f$ on $\gamma$ which is nonvanishing at $p$ but vanishes at $q$ and so that
\begin{equation}\label{eqn:nullgen}
 \int_\gamma \left( (n-2)(f')^2 -f^2 \, \mathrm{Ric}(\gamma',\gamma') \right) \  \leq\ (n-2)\ \langle f^2 \,
 \gamma', H)\rangle\,|_{p} \,,
\end{equation}
then there is a focal point to $S$ along $\gamma$.
%; if the inequality holds strictly, then the focal point is located before $q$.  
\end{Proposition}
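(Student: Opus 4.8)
The plan is to reduce this to the classical focusing argument of \cite[Prop.\ 10.43]{ON83} by a standard second-variation/index-form computation, following \cite{FK19}. First I would set up the geometry: let $\gamma:[0,b]\to M$ be the null geodesic with $\gamma(0)=p\in S$, $\gamma(b)=q$, and consider the space of piecewise-smooth variation vector fields $V$ along $\gamma$ that are tangent to $S$ at $p$, perpendicular to $\gamma'$, and vanish at $q$. On the screen bundle (the quotient $\gamma'^\perp/\mathbb{R}\gamma'$, which for a null geodesic carries a well-defined positive-definite metric), the null index form reads
\begin{equation*}
 I(V,V) = \int_0^b \left( \langle V',V'\rangle - \langle R(V,\gamma')\gamma', V\rangle \right)\,dt \ -\ \langle \mathrm{II}(V,V),\gamma'\rangle\big|_{p}\,,
\end{equation*}
where $\mathrm{II}$ is the second fundamental form of $S$. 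The key classical fact is that if $I$ is positive semidefinite on this space then there is no focal point of $S$ strictly before $q$, so to produce a focal point in $[p,q]$ it suffices to exhibit a single admissible $V$ with $I(V,V)\le 0$ (with appropriate boundary behaviour forcing the focal point to occur at or before $q$).

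Next I would build the test field from the given function $f$. Choose a parallel screen field $E$ along $\gamma$ with $\langle E,E\rangle = 1$ that at $p$ is one of the principal directions realizing the trace (i.e.\ chosen so the pointwise inequality \eqref{eqn:nullgen} is the contracted/averaged form), and set $V := f E$. Then $V' = f' E$ since $E$ is parallel, so $\langle V',V'\rangle = (f')^2$ and $\langle R(V,\gamma')\gamma',V\rangle = f^2 \langle R(E,\gamma')\gamma', E\rangle$. Summing over an orthonormal parallel screen frame $E_1,\dots,E_{n-2}$ (each scaled by the same $f$) and using that $\sum_a \langle R(E_a,\gamma')\gamma',E_a\rangle = \mathrm{Ric}(\gamma',\gamma')$, the total index form of $V_a = fE_a$ becomes
\begin{equation*}
 \sum_{a=1}^{n-2} I(V_a,V_a) = \int_0^b \left( (n-2)(f')^2 - f^2\,\mathrm{Ric}(\gamma',\gamma') \right) dt \ -\ (n-2)\,\langle f^2\gamma', H\rangle\big|_{p}\,,
\end{equation*}
where $H = \frac{1}{n-2}\sum_a \mathrm{II}(E_a,E_a)$ is the mean curvature vector of $S$ and I have used $f(b)=0$ to discard the boundary term at $q$. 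Hypothesis \eqref{eqn:nullgen} is exactly the statement that the right-hand side is $\le 0$, hence $\sum_a I(V_a,V_a)\le 0$, so at least one $I(V_a,V_a)\le 0$.

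Finally I would invoke the focal-point criterion: a nonzero admissible field with $I(V,V)\le 0$ that vanishes at the endpoint $q$ forces the existence of a focal point of $S$ along $\gamma$ in $(p,q]$ (if $I(V,V)<0$ strictly before $q$; the equality/boundary case is handled by the standard argument that otherwise $V$ would be a Jacobi field making $q$ itself focal, or by a small perturbation of $f$). The main obstacle, and the only genuinely non-routine point, is the null-specific bookkeeping: one must work consistently on the screen bundle $\gamma'^\perp/\mathbb{R}\gamma'$ rather than on $\gamma'^\perp$ itself, check that the relevant objects ($R(\cdot,\gamma')\gamma'$, the second fundamental form of the codimension-2 surface $S$ projected to the screen, parallel transport) descend to that quotient, and verify the sign/signature conventions — especially since, as the footnote warns, \cite{FK19} uses the opposite signature. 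Everything else is the verbatim second-variation computation of \cite[Prop.\ 10.43]{ON83} with the curvature term kept general rather than assumed nonnegative.
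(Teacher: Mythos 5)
Your reconstruction is correct in its essentials and follows the same index-form strategy that Fewster--Kontou use in \cite[Prop.\ 2.7]{FK19}; note that the paper under review does not prove this statement at all but simply cites it, so there is no internal proof to compare against.

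A few remarks on completeness. The main structural point --- form the summed index over a parallel orthonormal screen frame, identify the trace of $R(\cdot,\gamma')\gamma'$ with $\mathrm{Ric}(\gamma',\gamma')$ because $\gamma'$ is null, read the hypothesis as $\sum_a I(V_a,V_a)\le 0$, and then pick one $a$ with $I(V_a,V_a)\le 0$ --- is exactly right and is where the factor $n-2$ and the mean curvature $H$ enter. Your phrase ``chosen so the pointwise inequality \eqref{eqn:nullgen} is the contracted/averaged form'' is unnecessary and slightly misleading: there is no pointwise selection to be made, the argument only needs the sum, and then an averaging/pigeonhole step; your very next sentence does this correctly, so that earlier phrase can be dropped. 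The screen-bundle caveat you flag is the genuine technical content: one has to check that the index form, the shape operator of $S$, and parallel transport all descend to $\gamma'^\perp/\mathbb{R}\gamma'$, and that the identification of that quotient with $T_pS$ at the initial point is what makes $V_a(0)\in T_pS$ admissible; this is standard but worth stating. Finally, the equality case is handled exactly as you suggest (either a nontrivial Jacobi field vanishing at $q$ makes $q$ focal, or a compactly supported perturbation of $f$ produces a strictly negative index earlier), and the nonvanishing of $f$ at $p$ is precisely what ensures the test field is nonzero so the standard positivity criterion applies.
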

 
\begin{Lemma}\label{lem:deltafocusing}
Let $S$ be a $C^2$-spacelike submanifold of codimension 2 in a smooth spacetime. Let $\gamma$ be a geodesic starting at some $p\in S$ such that $\nu:=\gamma'(0)$  is a future pointing null normal to $S$. Let the convergence
\begin{equation}\label{eq:fk} 
 c:=k_S(\nu):= \langle H_{\gamma(0)},\nu \rangle >0
\end{equation}
and choose some $b>\frac{1}{c}$ and $0<\delta (b,c)=:\delta \leq \frac{3}{b^2} (n-2)(bc-1)$. Now, if $\text{Ric}(\gamma',\gamma') \geq - \delta$ along $\gamma$, then $\gamma|_{[0,b]}$ cannot be maximizing to $S$, provided it exists that long.\footnote{One can specify the choice of $b$ further to allow for bigger violations of $\delta$, but we do not need this here.}
\end{Lemma}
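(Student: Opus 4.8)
The plan is to deduce the statement directly from the focusing criterion \cite[Prop.\ 2.7]{FK19} recalled just above, applied to $\gamma$ with the explicit \emph{linear} test function. I parametrize $\gamma$ affinely on $[0,b]$ with $\gamma(0)=p$, so that $q:=\gamma(b)\in J^+(p)\subseteq J^+(S)$ since $\nu=\gamma'(0)$ is future directed null. Put $f(t):=1-t/b$ on $\gamma$; this is smooth along $\gamma$, equals $1$ at $p$ and vanishes at $q$, so it is an admissible test function in \eqref{eqn:nullgen}. What remains is to check that \eqref{eqn:nullgen} holds for this $f$ --- and this is precisely where the prescribed relation between $\delta$, $b$ and $c$ enters.

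For the right-hand side of \eqref{eqn:nullgen} one has, using $f(0)=1$ and \eqref{eq:fk},
\begin{equation*}
(n-2)\,\langle f^2\gamma',H\rangle\big|_p \;=\; (n-2)\,\langle \nu,H_{\gamma(0)}\rangle \;=\; (n-2)\,k_S(\nu) \;=\; (n-2)c\,.
\end{equation*}
For the left-hand side, the hypothesis $\mathrm{Ric}(\gamma',\gamma')\geq-\delta$ along $\gamma$ gives $-f^2\,\mathrm{Ric}(\gamma',\gamma')\leq\delta f^2$, whence
\begin{equation*}
\int_\gamma \big((n-2)(f')^2 - f^2\,\mathrm{Ric}(\gamma',\gamma')\big) \;\leq\; (n-2)\int_0^b \frac{dt}{b^2} + \delta\int_0^b \Big(1-\frac{t}{b}\Big)^2 dt \;=\; \frac{n-2}{b}+\frac{\delta b}{3}\,.
\end{equation*}
So \eqref{eqn:nullgen} is satisfied as soon as $\frac{n-2}{b}+\frac{\delta b}{3}\leq(n-2)c$, i.e.\ as soon as $\delta\leq\frac{3}{b^2}(n-2)(bc-1)$ --- which is exactly the assumed bound (and $bc-1>0$ because $b>1/c$). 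Hence \cite[Prop.\ 2.7]{FK19} yields a focal point to $S$ along $\gamma$ between $p$ and $q=\gamma(b)$.

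It then remains to pass from the existence of a focal point to non-maximality. Here I would invoke the standard fact that a null geodesic normal to a spacelike submanifold of codimension $2$ enters $I^+(S)$ once it passes a focal point, so that a focal point strictly before $\gamma(b)$ forces $\gamma(b)\in I^+(S)$ (by push-up) and hence $\gamma|_{[0,b]}$ cannot be maximizing to $S$. The one genuinely delicate point --- and the main obstacle --- is the borderline case in which the focal point produced by \cite[Prop.\ 2.7]{FK19} sits exactly at the endpoint $\gamma(b)$, which can only occur when $\delta$ equals the threshold $\frac{3}{b^2}(n-2)(bc-1)$; in the situations where this Lemma is used $\delta$ is delivered by the surrogate energy condition Lemma \ref{lem:sec} and may be taken strictly below the threshold, making \eqref{eqn:nullgen} a strict inequality and thus placing the focal point in the open interval $(0,b)$. (Alternatively one may replace the linear $f$ by the solution of $(n-2)f''=\delta f$ with $f(b)=0$, which beats the linear estimate and again turns the inequality strict.) The remaining verifications are routine.
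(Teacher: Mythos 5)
Your proof is correct and takes essentially the same route as the paper: same linear test function $f(t)=1-t/b$, same integral estimates producing $\frac{n-2}{b}+\frac{\delta b}{3}\leq(n-2)c$, same appeal to \cite[Prop.\ 2.7]{FK19}. The only difference is that you explicitly address the ``focal point exactly at the endpoint'' technicality in the passage from focal point to non-maximality, which the paper passes over in silence; your observation that the applications (via Lemma \ref{lem:sec}) deliver a strict Ricci bound, making \eqref{eqn:nullgen} strict, is the standard and correct way to dispose of it.
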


\begin{proof}
We set $f(t):= 1-\frac{t}{b}$ and check condition \eqref{eqn:nullgen}. For our choice of $\delta$ we obtain
\begin{align}\nonumber
 \int_{0}^{b}&(n-2)\,\frac{1}{b^2}\, dt - \int_{0}^{b} \left(1-2\, \frac{t}{b}+\frac{t^2}{b^2}\right)\, \text{Ric}(\gamma'(t),\gamma'(t)) \, dt  \\
 &\leq %\frac{n-2}{b}+ \delta \int_{0}^{b}\left(1-2 \frac{t}{b}+\frac{t^2}{b^2}\right) \, dt= 
\frac{n-2}{b}+ \delta\, \frac{b}{3} \leq (n-2)\, c =(n-2)\, k_S(\nu)
=(n-2)\, \langle f^2 \gamma', H)\rangle\,|_{p} \,,
\end{align}
and hence $\gamma|_{[0,b]}$ cannot be maximizing.
\end{proof}

Recall that for $C^{1}$-metrics the mean curvature and the convergence of $S$ are still continuous.
The core of the following proof is largely from \cite{MinSil20}.

\begin{Proposition}\label{prop:cp-c1}
Let $(M,g)$ be an $n$-dimensional (with $n\geq 3$), past reflecting, null geodesically complete \emph{$C^{1}$}-spacetime without null branching which satisfies the distributional null energy condition and admits an asymptotically regular hypersurface $\Sigma$ with a piercing. Then for any enclosing surface $S\subseteq\Sigma$ the closure of its inside, $\overline\Sigma_-=S\cup\Sigma_-$ is compact.
\end{Proposition}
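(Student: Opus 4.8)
The plan is to derive the compactness of $\overline{\Sigma}_-$ in two steps: a causal–topological reduction, following the scheme of \cite{MinSil20}, to the statement that the ``ingoing cap'' of $\partial J^+(\overline{\Sigma}_+)$ over $S$ is relatively compact; and then a proof of that relative compactness via the focusing of ingoing null geodesics from the inner trapped surface $S$, using the surrogate energy condition (Lemma~\ref{lem:sec}), the focusing estimate (Lemma~\ref{lem:deltafocusing}) and the no-branching limit statement (Corollary~\ref{Cor:epslimitmax}).

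\emph{Reduction to a horismos estimate.} Since $\Sigma$ is acausal, no point of $\Sigma_-$ lies in $J^+(\overline{\Sigma}_+)$, hence $J^+(\overline{\Sigma}_+)\neq M$ and $\partial J^+(\overline{\Sigma}_+)$ is a nonempty, closed, achronal, edgeless topological $(n-1)$-hypersurface containing $\overline{\Sigma}_+$ --- here one only uses that $I^+$ is open and push-up holds in $C^1$. Write $\partial J^+(\overline{\Sigma}_+)=\overline{\Sigma}_+\cup\mathcal{C}$ with $\mathcal{C}:=\overline{\partial J^+(\overline{\Sigma}_+)\setminus\overline{\Sigma}_+}$, so that $\overline{\Sigma}_+\cap\mathcal{C}=S$. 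The heart of the proof is that $\mathcal{C}$ is compact. Granting this, the piercing projection $\rx:M\to\Sigma$ is injective on $\partial J^+(\overline{\Sigma}_+)$ (two points of this achronal set cannot lie on a common integral curve of the timelike field $X$), and, being continuous and mapping an $(n-1)$-manifold into the $(n-1)$-manifold $\Sigma$, it is open there by invariance of domain. Thus $\rx\bigl(\partial J^+(\overline{\Sigma}_+)\bigr)=\overline{\Sigma}_+\cup\rx(\mathcal{C})$ is open in $\Sigma$; it is also closed, being the union of the closed set $\overline{\Sigma}_+$ and the compact set $\rx(\mathcal{C})$. Since $\Sigma$ is connected, $\rx\bigl(\partial J^+(\overline{\Sigma}_+)\bigr)=\Sigma$, whence $\overline{\Sigma}_-=\Sigma\setminus\Sigma_+\subseteq(\overline{\Sigma}_+\setminus\Sigma_+)\cup\rx(\mathcal{C})=S\cup\rx(\mathcal{C})\subseteq\rx(\mathcal{C})$, a compact set; being closed, $\overline{\Sigma}_-$ is compact.

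\emph{Compactness of the ingoing cap.} On the compact surface $S$ the continuous convergence satisfies $k_-\geq 2c>0$ for some $c$; fix $b>1/c$ and let $\delta=\delta(b,c)$ be as in Lemma~\ref{lem:deltafocusing}. Choose a compact set $\mathcal{K}\supseteq S$ (a large $h$-ball around $S$) and, by Lemma~\ref{lem:sec}, an $\epsilon_0>0$ such that $\mathrm{Ric}[\check{g}_\epsilon](X,X)>-\delta$ for $\check{g}_\epsilon$-null $X$ over $\mathcal{K}$ of controlled $h$-norm whenever $\epsilon<\epsilon_0$; moreover the $\check{g}_\epsilon$-ingoing null normals of $S$ converge to $K_-$ and have convergence $>c$ for small $\epsilon$. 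By Lemma~\ref{lem:deltafocusing} applied to the smooth metrics $\check{g}_\epsilon$, any future-directed ingoing $\check{g}_\epsilon$-null geodesic from $S$ ceases to be $S$-maximizing once its affine length exceeds $b$, and --- choosing $\mathcal{K}$ large enough that a null geodesic of affine length $\leq b$ issuing from $S$ cannot reach $\partial\mathcal{K}$ while staying in $\mathcal{K}$ --- its $S$-maximizing portion remains in $\mathcal{K}$; thus the ingoing generators of $E^+_{\check{g}_\epsilon}(S)$ lie in $\mathcal{K}$, uniformly for $\epsilon<\epsilon_0$. Now let $\gamma\subseteq E_g^+(S)$ be an ingoing $g$-null generator of $\partial J^+(\overline{\Sigma}_+)$; using past reflectivity (as in \cite{Min19a,MinSil20}), together with null completeness, $\gamma$ has a past endpoint on $S$ at which it is the ingoing null normal, and $E_g^+(S)$ is closed. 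By Corollary~\ref{Cor:epslimitmax} and the absence of null branching, for each $\delta'>0$ the segment $\gamma|_{[0,1-\delta']}$ is a uniform limit of $\check{g}_{\epsilon_n}$-null curves in $E^+_{\check{g}_{\epsilon_n}}(S)$ whose initial velocities converge to that of $\gamma$; since a generator of $E^+_{\check{g}_{\epsilon_n}}(S)$ with past endpoint on $S$ emanates along a $\check{g}_{\epsilon_n}$-null normal to $S$, these approximants are ingoing, hence confined to $\mathcal{K}$, and so $\gamma([0,1-\delta'])\subseteq\mathcal{K}$. Letting $\delta'\to0$ gives $\gamma\subseteq\mathcal{K}$, so $\mathcal{C}\subseteq\mathcal{K}$ is compact.

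\emph{Main obstacle.} The genuinely delicate part is the behaviour of the future horismos without global hyperbolicity: that $\partial J^+(\overline{\Sigma}_+)$ really is a closed, edgeless topological hypersurface; that its generators lying off $\overline{\Sigma}_+$ truly issue from the edge $S$ as ingoing null \emph{normals} (so that the trapped condition $k_->0$ and Lemma~\ref{lem:deltafocusing} apply to them, excluding past-inextendible generators escaping every compact set); and that all of this survives the $\check{g}_\epsilon$-regularization and the limit $\epsilon\to0$. These points are exactly where the hypotheses of past reflectivity and of no null branching (via Corollary~\ref{Cor:epslimitmax}) are used in an essential way. Once $\mathcal{C}$ is known to be compact, the concluding reduction through $\rx$ and invariance of domain is comparatively soft.
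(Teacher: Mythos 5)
Your proposal follows the same three-step scheme as the paper: compactness of the ingoing cap (the paper's $\mathcal{H}^+$ and $T$), and then a soft argument via the piercing projection $\rho_X$, and the lemmas you invoke (Lemma~\ref{lem:sec}, Lemma~\ref{lem:deltafocusing}, Corollary~\ref{Cor:epslimitmax}, the limit curve theorem with past reflectivity) are exactly the ones the paper uses. Your version of the last step --- showing $\rho_X(\partial J^+(\overline{\Sigma}_+))$ is open by invariance of domain and closed as $\overline{\Sigma}_+\cup\rho_X(\mathcal{C})$, hence equal to the connected $\Sigma$ --- is a slightly cleaner global reformulation of the paper's local contradiction argument around a boundary point of $\rho_X(T)$ in $\Sigma_-$, and it works.

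There are, however, two substantive gaps in the ``compactness of the ingoing cap'' part, and they are precisely the points you flag as ``delicate'' rather than prove. First, nothing in your argument actually establishes that $\mathcal{C}\subseteq\mathcal{H}^+$, i.e.\ that every point of $\partial J^+(\overline{\Sigma}_+)\setminus\overline{\Sigma}_+$ lies on an \emph{ingoing} null generator with past endpoint on $S$. This is the paper's step~(2), and it is not an instance of closure of $E^+_g(S)$: it requires taking limits of timelike curves from $S$ via the limit curve theorem, a dichotomy on whether the target point lies on the limit curve, and past reflectivity to exclude the inextendible alternative by pushing the endpoint into $I^+(\Sigma_+)$. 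You cite the references but leave the argument out; without it, the bound on the generators says nothing about all of $\mathcal{C}$. Second, the focusing estimate in Lemma~\ref{lem:deltafocusing} is sensitive to the affine normalization: $k_S(\nu)$ scales with $\nu$, so a fixed $b>1/c$ only controls generators whose initial velocity is the normalized $K_-^{\epsilon}$, and the assertion that ``$\mathcal{K}$ can be chosen so that a null geodesic of affine length $\leq b$ from $S$ stays in $\mathcal{K}$'' is not an a priori fact but the nontrivial compactness statement \cite[Prop.~2.11]{Graf19}, which must be quoted and applied \emph{before} Lemma~\ref{lem:sec} fixes $\epsilon_0$ (otherwise the Ricci bound is only available on a set that is defined in terms of the bound you are trying to establish). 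The paper avoids the normalization issue entirely by fixing the parameter interval to $[0,1]$, constructing the relatively compact set $F$ via \cite[Prop.~2.11]{Graf19} over $K=\{\lambda K_-(p): 0\le\lambda\le 2/c\}$, and deriving a contradiction from $p\notin\pi(F)$, which forces $\mu>2/c$ and hence $k_S(\gamma'(0))>2$, so the $\check g_{\epsilon_k}$-approximants cannot maximize up to affine time $1/2<1$ while Corollary~\ref{Cor:epslimitmax} says they must. Once you adopt that contradiction structure and add the step-(2) argument, your proposal matches the paper's proof.
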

    
We consider the closed set $T:=\partial I^+(\Sigma_+)\setminus\Sigma_+$. Also, by Theorem \ref{thm:maxgeod} the set $E^+(S)$ really consists of null \emph{geodesics} emanating from $S$ 
and perpendicular to $S$, which follows as in \cite[10.45, 10.50]{ON83} by using Lemma \ref{l3}.
So we define $\mathcal{H}^+$ as the subset of all points $p\subseteq E^+(S)$ on future directed null geodesics $\gamma:[0,1]\to M$ with $\gamma(0)\in S$, $\gamma'(0)=K_-(\gamma(0))$. Note that $S\subseteq\mathcal{H}^+$. Further, no point on $\mathcal{H}^+$ can lie on a null geodesic from $S$ in direction of $K_+$, see e.g.\ \cite[Lemma 1.1]{Gan75}\footnote{The proof there is given for smooth spacetimes and one assumes $S$ to be simply connected, however the method of proof also works in our case. }

The proof consists in successively establishing the following three claims: 
 \medskip
 \begin{enumerate}
  \item [(1)] $\mathcal{H}^+$ is relatively compact, 
  \qquad (2) $T$ is compact,
  \qquad (3) $\rho_X(T)=\overline{\Sigma}_-$.
 \end{enumerate}
\medskip

\noindent
Steps (1) and (2) combine the causality part of the proof with the analytical arguments which we have to provide in $C^1$-regularity. Some arguments in steps (2) and (3) do not require changes from the original proofs put forward in \cite{MinSil20} resp.\ \cite{Sil10} but will be included as a sketch for the sake of completeness.

\begin{proof}
	
(1) Any point $p$ on $\mathcal{H}^+$ lies on a null geodesic emanating from $S$ and its initial tangent vector is inward pointing, i.e.\ proportional to $K_-(q)$ for some $q \in S$. 
By continuity $k_-$ possesses a minimum $c:= \min_{p\in S} k_-(p)= \min_{p\in S} \langle H_p,K_-(p) \rangle$ on $S$. 
Also, the set $K:= \{ (p,\lambda K_-(p)) \in TS ^\perp \, |\, 0 \leq \lambda \leq \frac{2}{c} \} \subseteq TM$ is compact and, by \cite[Prop.\ 2.11]{Graf19} (or rather a simplified version without $\epsilon$) the set 
\begin{equation}
F:= \bigcup_{\dot \gamma \text{ with } \dot \gamma(0) \in K} \text{im}(\dot \gamma|_{[0,1]}) 
\end{equation}
is relatively compact (Here $\dot \gamma$ denotes the trajectory of a geodesic in $TM$ with the specified initial conditions).

We will show that $\mathcal{H}^+ \subseteq \pi( F)$, where $\pi:TM \to M$ is the projection.
Assume the contrary and let $p \in \mathcal{H}^+ \bs \pi(F)$. Let $\gamma:[0,1]\to M$ be a null geodesic from $S$ to $p$ maximizing the distance to $S$. As $\gamma \subseteq \mathcal{H}^+$ we know that $\gamma'(0)= \mu K_-(\gamma(0))$ for some $\mu>\frac{2}{c}$, since $\gamma'(0) \not \in K$. This means that $k_S(\gamma'(0))= \langle H(\gamma(0)), \mu K_-(\gamma(0)) \rangle \geq \mu c > 2$.

Let $\check g_{\epsilon_k}$ be as in Lemma \ref{lem:2.2}. By Corollary \ref{Cor:epslimitmax} $\gamma|_{[0,1-\delta]}$ for some arbitrarily small $\delta$ is the $C^1$-limit of $\check g_{\epsilon_k}$-null geodesics $\gamma_{\epsilon_k}:[0,b_k]\to M$ with $b_k\to 1-\delta$ contained in $E^+_{\check g_{ \epsilon_k}}(S)$.
Further we can assume that all $\gamma_{\epsilon_k}$ are contained in a compact neighbourhood $\tilde K$ of $\gamma$ and that $c_1 < \| \gamma'_{\epsilon_k}\|_h < c_2$ for some $c_i>0$. Additionally for $k$ large enough we have $k_S^{\epsilon_k}(\gamma'_{\epsilon_k}(0)):=c_k>2$ and by Lemma \ref{lem:sec} we can also achieve $\text{Ric}[g_{\epsilon_k}](\gamma'_{\epsilon_k},\gamma'_{\epsilon_k}) \geq - 3(n-2)$. In order to apply Lemma \ref{lem:deltafocusing} for $b=1$ we set $\delta_k = \frac{3}{b^2}(n-2)(b \, c_k-1)=3(n-2)(c_k-1)$. Since $c_k>2$ for all large $k$, we have $-\delta_k<- 3 (n-2)$ and hence $\text{Ric}[g_{\epsilon_k}](\gamma'_{\epsilon_k},\gamma'_{\epsilon_k}) \geq -3(n-2) > -\delta_k$. So by Lemma \ref{lem:deltafocusing}, $\gamma_{\epsilon_k}$ cannot be maximizing up to $\frac{1}{c_k}< \frac{1}{2} <1-\delta<b$ but by construction it is maximizing up to $b_k\to 1-\delta$, a contradiction. 
\medskip 

(2) We prove the inclusion $T \subseteq \mathcal{H}^+$ which gives that $T$ is compact. Assume by contradiction that there were $q \in T\bs \mathcal{H}^+$. By \cite[Proof of 3.5]{MinSil20} we can find $q_n \in I^+(S)$ such that $q_n \to q$ and future directed, fututre inextendible timelike curves $\sigma_n:[0,\infty)\to M$ parametrized by $h$-arclength starting at $S$ with $\sigma_n(t_n)=q_n$.  
Further by the limit curve theorem, which is valid in $C^1$ spacetimes, see \cite[Thm.\ 14]{Min18} one obtains a future directed, future inextendible causal curve $\sigma$ starting at $S$. If $q$ were to lie on $\sigma$, it had to be a maximizing null geodesic perpendicular to $S$\footnote{by the same argument as above using \cite[10.45, 10.50]{ON83} and Lemma \ref{l3}}. 
It can however neither start inward going (in direction $K_-$) as then $q\in \mathcal{H}^+$ nor
outward going (in direction $K_+$) as then $q \in I^+(\Sigma_+)$.

Hence $q\not\in\sigma$ and so $t_n\to \infty$. Since by (1) there is no inward pointing $S$-null ray, there is $b\in(0,\infty)$ with $\sigma(b)\in I^+(S)$. But then, again by the limit curve theorem, $q \in \overline{I^+(\sigma(b))}$. By past-reflectivity one obtains $\sigma(b)\in \overline{I^-(q)}\cap  I^+(S)$, implying $q\in I^+(S)$ and hence $q \in I^+(\Sigma_+)$, contradicting $q\in T$.
\medskip

(3) First $\rx(T) \subseteq \overline \Sigma_-$ (since otherwise $T \cap I^+(\Sigma_+)\not=\emptyset$) and $\overline \Sigma_- \bs \rx(T) \subseteq \Sigma_-$ (since $S = \rx(S) \subseteq \rx(T)$). Now assuming indirectly that $\overline\Sigma_-\setminus\rx(T)\not=\emptyset$, there is $p\in \pt_\Sigma\rx(T)\cap\Sigma_-$ and we will reach a contradiction by showing that $p \in \text{int}\rx(T)$. 

By compactness of $T$ there is $q\in T$ with $\rx(q)=p$, and $q\not\in S$ (otherwise $p=q\in S\cap\Sigma_-=\emptyset$). So $q\in T\setminus S=\pt I^+(\Sigma_+)\setminus\overline\Sigma_+$ which is a topological hypersurface ($S$ being the edge of the achronal set $T$). So there is $V_0$, an $M$-neighbourhood of $q$ with $V_0 \cap T$ open in $T\bs S$, $\rx(V_0) \subseteq \Sigma_-$, and $V_0 \cap \Sigma= \emptyset$. Next denote by $\Psi$ the local flow of $X$ and choose $\epsilon >0$ and $U_0$, an open $M$-neighbourhood of $q$, so small that $\Psi(U_0 \times (-\epsilon, \epsilon))\subseteq V_0$. Further set $\Psi_0 := \Psi_{|(U_0\cap T)\times (-\epsilon,\epsilon)}$, and $W:= \text{Im}\Psi_0$. By achronality of $T$ and invariance of domain $W$ is open and $\Psi_0$ is a homeomorphism. But then $p\in\rx(U_0\times(-\epsilon,\epsilon))=\rx(W)$ and the latter set is open by openness of $\rx$ (\cite[14.31]{ON83}) and so $p\in \text{int}\rx(T)$.
\end{proof}

\subsection{Topological aspects}

Finally we invoke Proposition \ref{prop:cp-c1} to prove our main result.
Here we will be brief on the topological aspects laid out already in the proof of \cite[Thm.\ 2.1]{Sil10}.

\begin{proof}[Proof of theorem \ref{thm:glc1}] 
	
Let $\Phi:\tilde{M} \to M$ be a connected (smooth) covering with $\Phi_\# (\pi_1 (\tilde M)) = j_\#(\pi_1 (S))$, where $j$ is the inclusion of $S$ in $M$.  %, cf.\ \cite[Prop.\ 1.36]{HA02}.

Note that w.l.o.g.\ one can assume the vector field of the piercing to be complete and by properties of its flow map easily show that $M \cong \R \times \Sigma$. Hence the inclusion $m$ of $\Sigma$ in $M$ induces an isomorphism $m_\# : \pi_1(\Sigma) \to \pi_1(M)$. In particular, $\Phi_\Sigma := \Phi|_{\tilde{\Sigma}}: \Phi^{-1}(\Sigma):=\tilde{\Sigma} \to \Sigma$ is a Riemannian covering with $\tilde{\Sigma}$ connected. 

In Lemma \ref{lem:final} below we will establish that $\Phi_\Sigma$ is trivial. Accepting this for the moment, we will show the theorem, i.e.\  for every $y \in \pi_1(\Sigma)$ there is $x \in \pi_1(S)$ with $i_\#(x) =y$. From the diagrams\\
\begin{minipage}{0.4 \textwidth}
 \centering
 \begin{tikzcd}
  & M\\
  S \arrow[r,hook, "i"] \arrow[ru,hook,"j"] &\Sigma \arrow[u,hook, "m"] & \tilde{M} \arrow[lu,"\Phi"]  \\
  & \tilde{\Sigma}\arrow[u,leftrightarrow, "\Phi_\Sigma"]  \arrow[ru,"\tilde{m}"] 	
\end{tikzcd}
\end{minipage}
%\hfill
\begin{minipage}{0.55 \textwidth}
\centering
\begin{tikzcd}
 & & j_\#(\pi_1(S)) \arrow[d,hook]\\
 \pi_1(S) \arrow[r, "i_\#"] \arrow[urr, bend left, "j_\#"]  & \pi_1(\Sigma) \arrow[r, leftrightarrow,"m_\#"]& \pi_1(M)  \\
 & \pi_1(\tilde{\Sigma}) \arrow[r] \arrow[u]  & \pi_1(\tilde{M}) \arrow[u] 
 \arrow[uu,bend right=70,"\Phi_\#"]
\end{tikzcd}
\end{minipage}\\
we see that 
\begin{equation}
 m_\#(y) = (\Phi \circ \tilde{m} \circ \Phi_\Sigma^{-1})_\#(y) = \Phi_\# (\tilde{m} \circ \Phi_\Sigma^{-1})_\# (y) \in \Phi_\#(\pi_1(\tilde{M})) = j_\#(\pi_1(S)).
\end{equation}
Since $j=m \circ i$ we have $m_\#(y) = j_\#(x) = m_\#(i_\#(x))$, and since $m_\#$ is an isomorphism and we are done.
\end{proof}
    
\begin{Lemma}\label{lem:final} 
$\Phi_\Sigma := \Phi|_{\tilde{\Sigma}}: \tilde{\Sigma}:=\Phi^{-1}(\Sigma) \to \Sigma$ is a trivial covering.
\end{Lemma}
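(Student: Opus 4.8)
The strategy is to show that the covering $\Phi_\Sigma:\tilde\Sigma\to\Sigma$ is trivial by proving that the index $[\pi_1(\Sigma):(\Phi_\Sigma)_\#\pi_1(\tilde\Sigma)]$, which equals the number of sheets of $\Phi_\Sigma$, is one. Since $\Phi$ was chosen so that $\Phi_\#\pi_1(\tilde M)=j_\#\pi_1(S)$ and, via the isomorphism $m_\#$, the image $(\Phi_\Sigma)_\#\pi_1(\tilde\Sigma)$ corresponds to $i_\#\pi_1(S)\subseteq\pi_1(\Sigma)$, the number of sheets is $[\pi_1(\Sigma):i_\#\pi_1(S)]$. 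The key input is that any sheet of $\tilde\Sigma$ contains a lift of $\bar\Sigma_+$; since $h_\#:\pi_1(S)\to\pi_1(\bar\Sigma_+)$ is surjective by the definition of an asymptotically regular hypersurface, each such lift of $\bar\Sigma_+$ is connected, and these lifts together with lifts of $S$ will be shown to connect all sheets.

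\textbf{Key steps.}
First I would use Proposition \ref{prop:cp-c1}: $\overline\Sigma_-$ is compact, hence $\bar\Sigma_+=\Sigma\setminus\Sigma_-$ is a \emph{closed} non-compact subset whose topology, together with $S=\partial\bar\Sigma_+$, ``sees'' all of $\pi_1(\Sigma)$ in a controlled way. Concretely, I would argue that $\tilde\Sigma=\Phi_\Sigma^{-1}(\bar\Sigma_-)\cup\Phi_\Sigma^{-1}(\bar\Sigma_+)$ and analyze the connected components of $\Phi_\Sigma^{-1}(\bar\Sigma_+)$. Because $h_\#:\pi_1(S)\to\pi_1(\bar\Sigma_+)$ is surjective and, as computed in the main proof, the image of $\pi_1(\tilde\Sigma)$ in $\pi_1(M)$ (equivalently in $\pi_1(\Sigma)$) is exactly $j_\#\pi_1(S)=i_\#\pi_1(S)$, the restriction of $\Phi_\Sigma$ over $\bar\Sigma_+$ is a covering whose characteristic subgroup already contains $h_\#\pi_1(S)=\pi_1(\bar\Sigma_+)$; hence each component of $\Phi_\Sigma^{-1}(\bar\Sigma_+)$ maps homeomorphically onto $\bar\Sigma_+$, and in particular each component is non-compact. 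Next I would invoke compactness of $\overline\Sigma_-$: a connected covering $\Phi_\Sigma^{-1}(\bar\Sigma_-)\to\bar\Sigma_-$ of a compact space is compact, and the lifts of $S=\bar\Sigma_-\cap\bar\Sigma_+$ glue the (finitely many, since $\bar\Sigma_-$ is compact) components of $\Phi_\Sigma^{-1}(\bar\Sigma_-)$ to the non-compact ``arms'' $\Phi_\Sigma^{-1}(\bar\Sigma_+)$. Connectedness of $\tilde\Sigma$ (established in the main proof, since $\tilde\Sigma=\Phi^{-1}(\Sigma)$ and $\tilde M$ is connected with $M\cong\R\times\Sigma$) then forces that there is exactly one arm and one core, i.e.\ $\Phi_\Sigma$ has a single sheet, so it is trivial.

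\textbf{Main obstacle.}
The delicate point is the bookkeeping that identifies $(\Phi_\Sigma)_\#\pi_1(\tilde\Sigma)$ with $i_\#\pi_1(S)$ inside $\pi_1(\Sigma)$ and then deduces that the restricted covering over $\bar\Sigma_+$ is trivial: one must use both the defining property $\Phi_\#\pi_1(\tilde M)=j_\#\pi_1(S)$, the isomorphism $m_\#:\pi_1(\Sigma)\xrightarrow{\sim}\pi_1(M)$ coming from $M\cong\R\times\Sigma$, and surjectivity of $h_\#:\pi_1(S)\to\pi_1(\bar\Sigma_+)$ simultaneously, checking that the relevant inclusions of fundamental groups are compatible with the covering maps. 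In case (ii), where $S$ is only assumed simply connected and one passes to the universal cover, the argument simplifies (then $j_\#\pi_1(S)$ is trivial so $\tilde M$ is the universal cover, $\tilde\Sigma$ its restriction, and $h_\#$ being automatically surjective onto the trivial group $\pi_1(\bar\Sigma_+)$—forced by the definition—makes $\bar\Sigma_+$ simply connected), but one still has to confirm that the universal covering of a spacetime restricts to the universal covering of $\Sigma$, which again relies on $m_\#$ being an isomorphism. I expect the compactness-versus-noncompactness dichotomy between $\overline\Sigma_-$ and $\bar\Sigma_+$, which is exactly what Proposition \ref{prop:cp-c1} supplies, to be the crux that makes the sheet count collapse to one.
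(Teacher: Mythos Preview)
Your proposal has a genuine gap at the crucial step. You apply Proposition~\ref{prop:cp-c1} only in the base spacetime $M$ to conclude that $\overline{\Sigma}_-$ is compact, and then try to leverage this downstairs compactness to control the sheet count of $\Phi_\Sigma$. But the passage ``a connected covering $\Phi_\Sigma^{-1}(\bar\Sigma_-)\to\bar\Sigma_-$ of a compact space is compact, and the lifts of $S$ glue the (finitely many, since $\bar\Sigma_-$ is compact) components'' is simply false: coverings of compact spaces need not be compact (e.g.\ $\mathbb{R}\to S^1$), and the number of components of $\Phi_\Sigma^{-1}(\bar\Sigma_-)$ is not bounded by compactness of the base. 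More seriously, even granting that each component of $\Phi_\Sigma^{-1}(\bar\Sigma_+)$ is a non-compact ``arm'' diffeomorphic to $\bar\Sigma_+$, connectedness of $\tilde\Sigma$ does \emph{not} force there to be only one arm: a $k$-sheeted cover can perfectly well have $k$ arms attached to a connected ``core'' covering $\bar\Sigma_-$, and the resulting space is connected. Your argument gives no mechanism to exclude this.

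The paper's proof is essentially different at exactly this point. After establishing (as you also outline) that each component of the preimage of an open thickening $V\supseteq\bar\Sigma_+$ maps diffeomorphically onto $V$, the paper assumes there are two distinct lifts $\tilde S_1,\tilde S_2$ of $S$, hence two disjoint non-compact copies $\tilde C_1,\tilde C_2$ of $\bar\Sigma_+$ in $\tilde\Sigma$. Since $\tilde S_1$ separates $\tilde\Sigma$, one has $\tilde C_2\subseteq \tilde S_1\cup(\tilde\Sigma\setminus\tilde C_1)$. Now the decisive move: Proposition~\ref{prop:cp-c1} is applied \emph{in the covering spacetime} $\tilde M$, with $\tilde S_1$ as enclosing surface and $\tilde\Sigma\setminus\tilde C_1$ as its inside, to conclude that $\tilde S_1\cup(\tilde\Sigma\setminus\tilde C_1)$ is compact. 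This is precisely why the theorem needs the hypotheses (i) or (ii) on past reflectivity of the covering, and why no null branching, null completeness, and the distributional NEC must be checked to lift. The contradiction is then immediate: the compact set $\tilde S_1\cup(\tilde\Sigma\setminus\tilde C_1)$ contains the closed non-compact set $\tilde C_2$. You never invoke the causality assumptions on $\tilde M$, and without applying Proposition~\ref{prop:cp-c1} upstairs there is no way to obtain the needed compactness in $\tilde\Sigma$.
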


\begin{proof}
First, by definition there is a local deformation $F : S \times (-1,1) \to \Sigma$ of $S$. Further set $U_F^- := F(S \times (0,1))$ and $V:= U_F^- \cup S \cup \Sigma_+$.
Then $\bar \Sigma_+$ is a deformation retract of $V$ and hence $\pi_1(\bar \Sigma_+)\cong \pi_1(V)$.
\medskip

Next we establish that on every connected component $\tilde{V}$ of $\Phi_\Sigma^{-1}(V)$ the map $\Phi_V:=\Phi|_{\tilde{V}} : \tilde{V} \to V$ is a diffeomorphism.

We only have to show injectivity since $\Phi_V$ is a local diffeomorphism.
Take $\tilde{p}, \tilde{q} \in \tilde{V}$ such that $\Phi_V(\tilde{p})= \Phi_V(\tilde{q})=:p \in V$.
Let $\tilde{\alpha} :[0,1] \to \tilde{V}$ be a path connecting these two points, then $\alpha := \Phi \circ \tilde{\alpha}$ is a loop in $V$, homotopic to a loop in $S$ since $\pi_1(V) \cong \pi_1(\bar \Sigma_+) \cong \pi_1(S)$. Further since $\Phi_\#(\pi_1(\tilde{M}))= j_\#(\pi_1(S))$, there is a loop $\tilde{\beta}$ in $\tilde{M}$, fixed endpoint-homotopic to $\tilde{\alpha}$ and so we must have $\tilde{p}=\tilde{q}$.
\medskip

In order to show that $\Phi_\Sigma$ is trivial we assume the converse, and, in particular that $\Phi_\Sigma^{-1}(S)$ has more than one component. Since $S \subseteq V$ each of these components is diffemorphic to $S$, and they separate $\tilde \Sigma$. 

Let $\tilde{S}_1, \tilde{S}_2$ be two such different components and let 
$\tilde{V}_1, \tilde{V}_2$ be the respective copies of $V$ containing 
them. %(note $\tilde{V}_1\cap \tilde{V}_2= \emptyset$).
Since $\bar \Sigma_+ \subseteq V$, each $\tilde{V}_i$ ($i=1,2$) contains a diffeomorphic copy of $\bar \Sigma_+$ called $\tilde{C}_i$, which are closed and non-compact.
Further since $\tilde{\Sigma}$ is connected and separated by $\tilde{S}_1$, the set $\tilde{C}_2$ %(which contains $\tilde{S}_2$) is itself 
is contained in $\tilde{S}_1 \cup \tilde{\Sigma}_-^{(1)}:= \tilde{S}_1 \cup (\tilde{\Sigma} \bs \tilde{C}_1)$, since otherwise 
%$\tilde{C}_2$ would intersect $\tilde C_1$ and thus we would have 
$\tilde{V}_1 \cap \tilde{V}_2 \neq \emptyset$.

So $\tilde{C}_2 \subseteq \tilde{S}_1 \cup \tilde{\Sigma}_-^{(1)}$. Being local properties, both null geodesic completeness and the distributional null convergence condition lift
to $\tilde M$ and by assumption $\tilde M$ is past-reflecting. 
Further as null branching is a local property as well, it also cannot occur in $\tilde M$. Thus the assumptions of Proposition \ref{prop:cp-c1} are fulfilled for $\tilde{M}$, $\tilde{\Sigma}$, $\tilde{C}_1$ and $\tilde{\Sigma}\bs \tilde{C}_1$ implying that $\tilde{S}_1 \cup (\tilde{\Sigma} \bs \tilde{C}_1)$ is compact. 
However this set contains the non-compact, closed set $\tilde{C}_2$, a contradiction and we are done.
\end{proof}

Finally we \emph{sketch the proof of Corollary \ref{cor:ghc1}}, i.e.\  the globally hyperbolic $C^{1}$-Gannon-Lee theorem. We can proceed analogously to the one of Theorem \ref{thm:glc1}, where most points will even be significantly easier. In fact the topological part remains the same and the only aspect one 
has to pay attention to is proving that ${\mathcal H}^+$ is relatively compact, i.e. step (1) in Proposition \ref{prop:cp-c1}: We have to prove that any maximizing $g$-null
geodesic is a limit of maximizing $\check g_\epsilon$-null geodesics. Corollary \ref{Cor:epslimitmax} does not apply, but we can replace it by the ``limiting-result'' \cite[Prop.\ 2.13]{Graf19}.
Note that compared to Corollary \ref{Cor:epslimitmax} the result in \cite[Prop.\ 2.13]{Graf19} only shows that there is one $g$-geodesic which can be approximated by $\check{g}_\epsilon$ maximizers, but this is sufficient for the proof to work out.
\medskip

\section*{Acknowledgement} We are greatful to Michael Kunzinger for sharing his experience and to Melanie Graf and Clemens Sämann for helpful discussions. We would especially like to thank Ettore Minguzzi for pointing out a mistake in an earlier version of this note, and to the anonymous referees for their valuable comments. This work was supported by FWF-grants  P28770, P33594 and the Uni:Docs program of the University of Vienna.

%bibliography{GLc1}
%\bibliographystyle{alpha}

\end{document}